\documentclass[11pt,a4paper]{article}
\usepackage[utf8]{inputenc}
\usepackage[T1]{fontenc}
\usepackage{amsmath,amssymb,amsfonts,amsthm}
\usepackage{geometry}
\usepackage{booktabs}
\usepackage{graphicx}
\usepackage{hyperref}
\usepackage{float}
\usepackage{enumerate}
\usepackage{xcolor}
\usepackage{authblk}
\usepackage[figuresright]{rotating}
\usepackage[backend=biber,sorting=nyt,style=numeric,giveninits=true,maxnames=2,minnames=1]{biblatex}
\newtheorem{definition}{Definition}[section]
\newtheorem{theorem}{Theorem}[section]
\newtheorem{proposition}{Proposition}[section]
\newtheorem{lemma}{Lemma}[section]
\newtheorem{corollary}{Corollary}[section]
\newtheorem{remark}{Remark}[section]
\newtheorem{example}{Example}[section]

\newcommand{\R}{\mathbb{R}}
\newcommand{\C}{\mathbb{C}}

\newcommand{\TV}{\operatorname{TV}}
\newcommand{\CE}{\operatorname{CE}}
\newcommand{\CD}{\operatorname{CD}}
\newcommand{\CM}{\operatorname{CM}}
\newcommand{\KL}{\operatorname{KL}}

\def\d{\mathrm{d}}
\def\e{\mathrm{e}}
\def\i{\mathrm{i}}

\allowdisplaybreaks[4]

\begin{document}

\title{\textbf{Complex-Valued Probability Measures and Their Applications in Information Theory}}
\author[1]{Siang Cheng}
\author[2]{Hejun Xu}
\author[1]{Tianxiao Pang}
\affil[1]{School of Mathematical Sciences, Zhejiang University, Hangzhou 310058, China \protect\\ \texttt{chengsiang@zju.edu.cn, txpang@zju.edu.cn}}
\affil[2]{Shanghai Fengqun Network Technology Co., Ltd., Shanghai 200237, China \protect\\ \texttt{xu.hejun@163.com}}
\date{}
\maketitle

\begin{abstract}
This paper introduces a comprehensive framework for complex-valued probability measures and explores their novel applications in information theory and statistical analysis. We define a complex probability measure as a phase-modulated extension of a classical probability measure. Building upon this foundation, we propose three fundamental information-theoretic quantities: complex entropy, which quantifies distribution uniformity through phase coherence; complex divergence, an asymmetric measure of dissimilarity between distributions; and the complex metric, a symmetric distance function satisfying the triangle inequality. We establish these concepts rigorously for both continuous and discrete probability distributions, proving key properties such as boundedness, continuity under total variation convergence, and clear extremal behaviors. A detailed comparative analysis with classical measures (Shannon entropy and Kullback-Leibler divergence) highlights the unique geometric and interpretive advantages of the proposed framework, particularly its sensitivity to distributional shape via a tunable phase parameter. We elucidate a profound formal analogy between the complex entropy integral and Feynman's path integral formulation of quantum mechanics, suggesting a deeper conceptual bridge. Finally, we demonstrate the practical utility of the complex metric through a detailed application in nonparametric two-sample hypothesis testing, outlining the testing procedure, advantages, limitations, and providing a conceptual simulation. This work opens new avenues for analyzing probability distributions through the lens of complex analysis and interference phenomena, with potential impacts across information theory, statistical inference, and machine learning. 
\end{abstract}

\textbf{Keywords:} Complex divergence, Complex entropy, Complex measure, Complex probability, Information theory.\\

\textbf{MSC2020 Subject Classification:} 62B10, 94A15.\\

\section{Introduction}\label{sec:introduction}

Probability theory, since its axiomatization by \citeauthor{Kolmogorov1960} in 1933 \cite{Kolmogorov1960}, has been fundamentally rooted in the domain of real numbers and non-negative measures. This framework has proven immensely successful in modeling uncertainty, randomness, and inference across the sciences. However, numerous natural and engineered phenomena are inherently described by complex numbers, where phase, rotation, and interference are essential features. Quantum mechanics relies on complex probability amplitudes, signal processing utilizes complex exponentials to represent phase information in waves, and harmonic analysis depends on the complex plane to decompose functions. This pervasive role of complexity in describing the natural world prompts a foundational question: Can the mathematical structure of probability itself be meaningfully extended into the complex domain? 

This paper answers affirmatively by constructing a coherent framework for complex-valued probability measures. We define such a measure, \( Q \), as a classical (non-negative) probability measure \( P \) modulated by a global phase factor: \( \d Q = {\e}^{\i\theta} \d P \) , where the notation ``\( \i \)'' stands for the imaginary unit. While the total variation of \( Q \) (its ``magnitude'') is identical to \( P \), the introduction of a phase allows us to embed probability distributions within a complex vector space. This is not merely a formal exercise but provides a powerful new perspective. By further assigning a local phase angle proportional to the probability density function (PDF) or probability mass function (PMF) itself through a term like \( {\e}^{\i\beta p(x)} \), we can define novel quantities that measure the coherence and interference between different probabilistic outcomes.

Information theory, founded by \citeauthor{Shannon1948} in 1948 \cite{Shannon1948}, provides a mathematical framework for quantifying information, communication, and uncertainty. Central to this framework are the concepts of entropy and divergence, which have evolved significantly beyond their original communication-theoretic roots. Shannon entropy is introduced as a measure of uncertainty or information content. It quantifies the average uncertainty or ``surprise'' associated with the random variable. In continuous settings, differential entropy extends this concept, though with different interpretive properties. The subsequent introduction of Kullback-Leibler (KL) divergence by \citeauthor{Kullback1951} in 1951 \cite{Kullback1951} provided a tool for measuring the difference between probability distributions. Over the decades, these concepts have been extended, generalized, and applied far beyond their initial scope, becoming indispensable in modern scientific and engineering disciplines, especially in machine learning, like generative adversarial networks, reinforcement learning and clustering. The most frequently used generalizations of KL divergence are Bregman divergence \cite{Bregman1967} and $f$-divergence \cite{Csiszar1967}. \citeauthor{Renyi1961} \cite{Renyi1961} introduced a family of entropy taking Shannon entropy as a special case, which finds applications in cryptography, ecology, and multifractal analysis. \citeauthor{Tsallis1988} \cite{Tsallis1988} proposed a non-additive generalization of traditional entropy which is influential in statistical physics and complex systems.

With the development of big data and artificial intelligence, the existing information theory framework appears somewhat inadequate both in theory and in application. We introduce new concepts of entropy and divergence, with the hope of achieving better results than traditional methods on certain specific problems. Our idea is to expand the space from real numbers to complex numbers, and define the magnitude of information through the change of argument. Intuitively, it resembles path integrals in quantum physics.

The core contributions of this work are threefold. First, we define and analyze \emph{complex entropy}, a measure of a distribution's uniformity interpreted as the degree of constructive interference when summing phase-weighted probabilities. Second, we derive \emph{complex divergence} and the \emph{complex metric}, which provide geometrically intuitive tools for comparing two distributions, with the latter satisfying all axioms of a true distance metric. Third, we rigorously establish the properties of these quantities for both continuous and discrete settings, draw insightful comparisons to Shannon entropy and $f$-divergences, and reveal a striking formal analogy with the path integrals of quantum mechanics. To demonstrate practical relevance, we present a detailed application of the complex metric in nonparametric two-sample hypothesis testing.

The paper is structured as follows. Section \ref{sec:CM} provides necessary background on complex measures. Section \ref{sec:CPM-CRV} formally defines complex probability measures and random variables. Section \ref{sec:CE} introduces complex entropy, delves into its geometric intuition, details its properties for continuous and discrete cases, and provides a comparative analysis with Shannon entropy. Section \ref{sec:CD} defines complex divergence and the complex metric, establishes their properties, compares them to KL and other divergences, and elaborates on the profound connection to path integrals. Section \ref{sec:application} presents a fleshed-out application in statistical testing. Section \ref{sec:conclusion} concludes and outlines promising directions for future research.

\section{Mathematical Preliminaries: Complex Measures}\label{sec:CM}

We begin by recalling some standard theory of complex measures (cf. \cite{Folland1999, Rudin1987}), which forms the mathematical bedrock of our framework.

\begin{definition}[Complex Measure]
Let \( (\Omega, \mathcal{M}) \) be a measurable space. A complex measure \( \nu \) is a function \( \nu: \mathcal{M} \to \mathbb{C} \) that is countably additive. That is, for any sequence of disjoint sets \( \{E_j\} \subset \mathcal{M} \),
\[
\nu\left( \bigcup_{j=1}^{\infty} E_j \right) = \sum_{j=1}^{\infty} \nu(E_j),
\]
where the series converges absolutely. Unlike non-negative measures, complex measures do not assign the value \( \infty \).
\end{definition}

Every complex measure \( \nu \) can be decomposed into its real and imaginary parts, \( \nu_r \) and \( \nu_i \), both of which are finite signed measures. This decomposition allows us to define integration in a natural way.

\begin{definition}[Integration]
A measurable function \( f: \Omega \to \mathbb{C} \) is integrable with respect to \( \nu \) (written \( f \in L^1(\nu) \)) if \( f \in L^1(\nu_r) \cap L^1(\nu_i) \). For such \( f \), we define
\[
\int_\Omega f \, \d\nu = \int_\Omega f \, \d\nu_r + \i \int_\Omega f \, \d\nu_i.
\]
\end{definition}

A central concept is the total variation measure \( |\nu| \), which captures the ``magnitude'' of the complex measure.

\begin{definition}[Total Variation Measure]
If \( \d\nu = f \, \d\mu \) for some non-negative measure \( \mu \) and \( f \in L^1(\mu) \), then the total variation measure is defined by \( \d|\nu| = |f| \, \d\mu \). This definition is independent of the chosen \( \mu \).
\end{definition}

The total variation measure is a non-negative measure and satisfies several key properties essential for our analysis.

\begin{theorem}[Properties of Total Variation] \label{thm:total-var-props}
Let \( \nu, \nu_1, \nu_2 \) be complex measures.
\begin{enumerate}[(1)]
    \item \textbf{Dominance:} \( |\nu(E)| \leq |\nu|(E) \) for all \( E \in \mathcal{M} \).
    \item \textbf{Absolute Continuity:} \( \nu \ll |\nu| \), and the Radon-Nikodym derivative satisfies \( |\d\nu/\d|\nu|| = 1 \), \( |\nu| \)-a.e.
    \item \textbf{Integration Inequality:} \( L^1(\nu) = L^1(|\nu|) \), and for any \( f \in L^1(\nu) \), \( \left| \int f \, \d\nu \right| \leq \int |f| \, \d|\nu| \).
    \item \textbf{Triangle Inequality:} \( |\nu_1 + \nu_2| \leq |\nu_1| + |\nu_2| \).
\end{enumerate}
\end{theorem}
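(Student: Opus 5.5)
The unifying device is to represent all the measures in play as densities with respect to a single finite non-negative measure, and then read each property off pointwise inequalities for complex numbers. For a single complex measure \( \nu \), set \( \mu = |\nu_r| + |\nu_i| \), where \( |\nu_r| = \nu_r^+ + \nu_r^- \) comes from the Jordan decomposition of the finite signed measure \( \nu_r \), and similarly for \( \nu_i \). Then \( \mu \) is a finite non-negative measure and \( \nu \ll \mu \). By Radon--Nikodym, \( \d\nu = f\,\d\mu \) for some \( f \in L^1(\mu) \) (apply the theorem to the real and imaginary parts separately), so the Definition of total variation gives \( \d|\nu| = |f|\,\d\mu \). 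I take the independence of this definition from the choice of \( \mu \) as given, as the paper states. For two measures \( \nu_1, \nu_2 \) I would use \( \mu = \mu_1 + \mu_2 \), which dominates both and hence also \( \nu_1 + \nu_2 \).

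\textbf{(1) Dominance.} Compute
\[
|\nu(E)| = \left|\int_E f\,\d\mu\right| \le \int_E |f|\,\d\mu = |\nu|(E),
\]
using the elementary modulus inequality for complex-valued integrals against a positive measure. That inequality follows by rotating: choose \( |\alpha| = 1 \) with \( \alpha\int f = |\int f| \), and take real parts.

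\textbf{(4) Triangle inequality.} With \( \d\nu_k = f_k\,\d\mu \) for \( k = 1, 2 \), we get \( \d(\nu_1+\nu_2) = (f_1+f_2)\,\d\mu \), and pointwise \( |f_1+f_2| \le |f_1|+|f_2| \). Integrating over any set gives the claim.

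\textbf{(2) Absolute continuity.} Define \( h = f/|f| \) where \( f \neq 0 \), and \( h = 1 \) elsewhere. Then \( f = h|f| \), so \( \d\nu = h\,\d|\nu| \). Consequently \( \nu \ll |\nu| \), and \( \d\nu/\d|\nu| = h \) with \( |h| = 1 \). The one subtlety is uniqueness: \( h \) is determined only \( |\nu| \)-a.e. The set \( \{f = 0\} \) is \( |\nu| \)-null, so the arbitrary choice made there is harmless. Any other Radon--Nikodym derivative agrees with \( h \) \( |\nu| \)-a.e., so \( |\d\nu/\d|\nu|| = 1 \) holds \( |\nu| \)-a.e.

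\textbf{(3) Integration inequality.} Measurability is not an issue; the real question is how the paper's definition of \( L^1(\nu) \), namely \( L^1(\nu_r)\cap L^1(\nu_i) \), interacts with \( |\nu| \).
\begin{enumerate}[(a)]
\item Write \( f = u + \i v \) with \( u, v \) real. Then \( \d\nu_r = u\,\d\mu \) and \( \d\nu_i = v\,\d\mu \), so \( \d|\nu_r| = |u|\,\d\mu \) and \( \d|\nu_i| = |v|\,\d\mu \).
\item Use the pointwise bounds \( \max(|u|,|v|) \le |f| \le |u|+|v| \).
\item It follows that \( g \in L^1(|\nu|) \) if and only if \( g \in L^1(|\nu_r|)\cap L^1(|\nu_i|) \). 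Since integrability against a signed measure means integrability against its variation, this is exactly \( L^1(\nu) = L^1(|\nu|) \).
\end{enumerate}
For the inequality, the integral \( \int g\,\d\nu_r + \i\int g\,\d\nu_i \) equals \( \int g f\,\d\mu \), which in turn equals \( \int g h\,\d|\nu| \) by (2). Applying the modulus inequality again gives
\[
\left|\int g\,\d\nu\right| \le \int |g|\,|h|\,\d|\nu| = \int |g|\,\d|\nu|.
\]

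I expect the main obstacle to be this bookkeeping in (3): matching the paper's definition of integrability with integrability against \( |\nu| \), and justifying the chain-rule identity \( \int g\,\d\nu_r = \int g u\,\d\mu \). That identity holds first for simple functions and then extends by dominated convergence. Everything else reduces to pointwise inequalities for complex numbers.
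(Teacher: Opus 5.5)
Your proposal is correct. The paper gives no proof of its own, only recalling the result from Folland and Rudin, and your argument is the standard textbook proof it relies on: dominate by \(\mu = |\nu_r| + |\nu_i|\) (or \(\mu_1 + \mu_2\) for two measures), write \(\d\nu = f\,\d\mu\), and read all four properties off pointwise inequalities for \(f\). One small tidy-up in (3): extend the chain rule \(\int g\,\d\lambda = \int g\,w\,\d\mu\) (for \(\d\lambda = w\,\d\mu\), \(w \ge 0\)) from simple functions first to nonnegative \(g\) by monotone convergence, since this supplies the integrable majorant \(|g|\,w\) that dominated convergence would need, and then to general \(g\) by linearity.
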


These properties, particularly the integration inequality, will be instrumental in analyzing the boundedness and continuity of our proposed information measures.

\section{Complex Probability Measures and Random Variables}\label{sec:CPM-CRV}

\subsection{Definition and Interpretation}
We now introduce the core object of this study: a probability measure that takes values in the complex plane, which is inspired by the definition of wave function in physics (cf. \cite{Born1926, Schrodinger1926}).

\begin{definition}[Complex Probability Measure] \label{def:complex-prob-measure}
A complex measure \( Q \) on \( (\Omega, \mathcal{M}) \) is called a complex probability measure if there exists a classical probability measure \( P \) (i.e., \( P \geq 0, P(\Omega)=1 \)) and a phase parameter \( \theta \in \R \) such that
\[
\d Q = {\e}^{\i \theta} \d P.
\]
We say \( Q \) is the complex probability induced by \( P \) with global phase \( \theta \). 
\end{definition}

This definition is intentionally simple and conservative. The total variation measure of \( Q \) is precisely the original probability measure \( P \) (i.e., \( |Q| = P \)), and the Radon-Nikodym derivative has unit modulus (i.e., \( |\d Q/\d P| = 1 \), \( P \)-a.e.). The phase \( \theta \) is a global constant, representing a uniform rotation of the entire probability measure in the complex plane. While this might seem like a minor extension, it formally allows us to place classical probability within a complex vector space. Every classical probability measure \( P \) generates an entire family of complex probability measures \( \{Q_\theta : \theta \in \R\} \), analogous to a circle of points with the same radius in the complex plane.

\begin{example}
Let \( P \) be the uniform distribution on the interval \([0,1]\), with PDF \( p(x) = \mathbb{I}\{0 \le x \le 1\} \), where \( \mathbb{I}\{\cdot\}\) stands for the indicator function. The induced complex probability measure with phase \( \theta = \pi/2 \) is given by \( \d Q_{\pi/2} = {\e}^{\i\pi/2} \d x = \i \, \d x \). For the event \( A = [0, 0.5] \), we have \( Q_{\pi/2}(A) = 0.5\i \).
\end{example}

\subsection{Complex Random Variables}
Having defined complex probability measures, we can now consider random variables that map to this new structure.

\begin{definition}[Complex Random Variable] \label{def:complex-rv}
Let \( (\Omega, \mathcal{M}, Q) \) be a complex probability space. A function \( X: \Omega \to \C \) is called a complex random variable if it is measurable from \( \mathcal{M} \) to the Borel \( \sigma \)-algebra on \( \C \). 
\end{definition}

Essentially, a complex random variable is a complex measurable function on \( (\Omega, \mathcal{M}, Q) \). A straightforward but crucial result connecting complex measurable functions to their real counterparts can be easily applied to complex random variables (cf. \cite{Folland1999}). 

\begin{proposition} \label{prop:complex-rv-equiv}
A function \( X: \Omega \to \C \) is a complex random variable if and only if its real part, denoted as \( \text{Re}(X) \), and imaginary part, denoted as \( \text{Im}(X) \), are both (classical) real random variables. 
\end{proposition}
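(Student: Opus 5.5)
The plan is to reduce the statement to a fact about Borel sets in \( \C \cong \R^2 \). Write \( \pi_1(z)=\text{Re}(z) \) and \( \pi_2(z)=\text{Im}(z) \), so that \( \text{Re}(X)=\pi_1\circ X \) and \( \text{Im}(X)=\pi_2\circ X \). The key fact I will use is that, under the identification \( \C\cong\R^2 \) via \( z\mapsto(\text{Re}\,z,\text{Im}\,z) \), the Borel \( \sigma \)-algebra \( \mathcal{B}_\C \) coincides with the product \( \sigma \)-algebra \( \mathcal{B}_\R\otimes\mathcal{B}_\R \). This is the standard result that the product \( \sigma \)-algebra of two second-countable spaces equals the Borel \( \sigma \)-algebra of the product topology (cf. \cite{Folland1999}).

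\emph{Necessity.} Suppose \( X \) is a complex random variable in the sense of Definition \ref{def:complex-rv}. The projections \( \pi_1,\pi_2:\C\to\R \) are continuous, hence Borel measurable. Compositions of measurable maps are measurable, so \( \text{Re}(X) \) and \( \text{Im}(X) \) are \( \mathcal{M} \)-measurable real functions, i.e.\ classical real random variables.

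\emph{Sufficiency.} Suppose \( \text{Re}(X) \) and \( \text{Im}(X) \) are both measurable, and consider an open rectangle \( R=(a,b)\times(c,d) \). Then
\[
X^{-1}(R)=\text{Re}(X)^{-1}((a,b))\cap\text{Im}(X)^{-1}((c,d))\in\mathcal{M}.
\]
Every open set in \( \C \) is a countable union of such rectangles with rational endpoints, so \( X^{-1}(U)\in\mathcal{M} \) for every open \( U \). The family \( \{E\subset\C : X^{-1}(E)\in\mathcal{M}\} \) is a \( \sigma \)-algebra, because preimages commute with complements and countable unions. It contains all open sets, and therefore contains \( \mathcal{B}_\C \).

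The only step that needs any care is the countable-basis argument: an open subset of \( \C \) is a \emph{countable} union of rational rectangles. This follows from second countability of \( \R^2 \), and it is exactly what makes the product and Borel \( \sigma \)-algebras agree. Note also that the complex measure \( Q \) plays no role at all. Measurability depends only on \( \mathcal{M} \), so the statement is purely measure-theoretic and holds verbatim on the complex probability space of Definition \ref{def:complex-prob-measure}.
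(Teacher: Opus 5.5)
Your proof is correct and is essentially the standard argument the paper relies on. The paper gives no proof of its own and simply cites Folland, where the result follows from $\mathcal{B}_\C=\mathcal{B}_\R\otimes\mathcal{B}_\R$ (by second countability) together with the fact that a map into a product space is measurable if and only if its components are; your necessity step (continuous projections) and sufficiency step (rational rectangles plus the good-sets $\sigma$-algebra) unpack exactly these two facts.
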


Given a complex random variable \( X \) defined on a space with complex probability measure \( Q_\theta = {\e}^{\i\theta}P \), it induces a complex distribution on \( \C \): for any Borel set \( B \subset \C \), \( Q'_\theta(B) := Q_\theta(X^{-1}(B)) = {\e}^{\i\theta} P(X^{-1}(B)) =: {\e}^{\i\theta} P'(B) \). This shows a desirable consistency: the family of induced complex distributions \( \{Q'_\theta: Q^\prime_\theta = Q_\theta \circ X^{-1}, \theta \in \mathbb{R}\} \) is exactly the family generated by the induced classical distribution \( P' \) via the same global phase multiplication, that is \( \{Q'_\theta: Q^\prime_\theta = {\e}^{\i\theta}P^\prime, \theta \in \mathbb{R}\} \). This consistency property ensures that working with complex random variables and their distributions is mathematically sound and compatible with the classical case.

\section{Complex Entropy: A Measure of Uniformity}\label{sec:CE}

With the formal groundwork laid, we now introduce the first major application. That is, a novel measure of the uniformity of a probability distribution, which we call complex entropy.

\subsection{Definition and Geometric Intuition (Continuous Case)}

\begin{definition}[Complex Entropy - Continuous] \label{def:complex-entropy-cont}
Let \( P \) be a probability measure on \( \R \) with PDF \( p(x) \). For a real parameter \( \beta > 0 \), the complex entropy (CE) of \( P \) is defined as
\[
\CE_\beta(P) = \left| \int_{\R} p(x) {\e}^{\i\beta p(x)} \, \d x \right|.
\]
\end{definition}

Compared to Shannon entropy and differential entropy, our definition of complex entropy has the least requirement for the PDF. Every continuous distribution can calculate complex entropy because the integration is always finite. However, differential entropy can be infinite when the integration is divergent.

\begin{example}[Geometric Interpretation: The Vector Sum Analogy]
The intuition behind this definition is vividly geometric. Consider each point \( x \) in the sample space as contributing a vector in the complex plane. The length of this vector is the ``probability'' \( p(x) \). Its direction (phase angle) is given by \( \beta p(x) \). Crucially, the angle is proportional to the ``probability'' itself. The complex entropy is then the magnitude of the vector sum of all these contributions.
\begin{itemize}
    \item \textbf{Uniform Distribution:} If \( p(x) = c \) is constant over the support of \( p(x) \), denoted as \( S \), then every vector has the identical phase angle \( \beta c \) over \( S \). When summing many identical vectors, they align perfectly. The resultant vector's length is simply the sum of their magnitudes, which is \( \int_{\R} c \mathbb{I}\{ x\in S\} \, \d x = 1 \). Hence, \( \CE_\beta(P) = 1 \), representing perfect constructive interference.
    \item \textbf{Non-Uniform Distribution:} If \( p(x) \) varies over \( S \), the phase angles \( \beta p(x) \) also vary over \( S \). When summing these vectors, they point in different directions and partially cancel each other out. The resultant vector's length is less than the total probability mass of \( 1 \). This destructive interference reduces \( \CE_\beta(P) \). A highly peaked distribution yields many short vectors with diverse phases and a few long vectors with distinct phases, leading to significant cancellation and a CE value close to \( 0 \).
\end{itemize}
\end{example}

Geometric interpretation shows that the more horizontal the PDF on its support, the larger the corresponding CE. Conversely, the more fluctuating the PDF on its support, the smaller the corresponding CE. The parameter \( \beta \) acts as a sensitivity control. A small \( \beta \) makes all angles near zero, forcing all vectors to point roughly to the right, minimizing interference and making CE close to \( 1 \) for any distribution. A larger \( \beta \) amplifies the phase differences between points of different probability, making the measure more sensitive to the distribution's shape.

\begin{lemma}[Fundamental Inequality] \label{lem:triangle-ineq-complex}
For any complex-valued, integrable function \( g \), we have the inequality \( \left| \int g \right| \leq \int |g| \), with equality if and only if \( g(x) = |g(x)| {\e}^{\i\theta} \) for some constant phase \( \theta \) (almost everywhere). 
\end{lemma}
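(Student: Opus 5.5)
The plan is the standard rotation trick. Set \( z = \int g \). If \( z = 0 \), the inequality is trivial because \( \int |g| \geq 0 \). Otherwise choose \( \theta \) with \( z = |z|{\e}^{\i\theta} \), so that \( |z| = {\e}^{-\i\theta} z = \int {\e}^{-\i\theta} g \). This quantity is real, so it equals its own real part:
\[
\left| \int g \right| = \int \text{Re}\bigl({\e}^{-\i\theta} g\bigr) \leq \int \bigl|{\e}^{-\i\theta} g\bigr| = \int |g|,
\]
using the pointwise bound \( \text{Re}(w) \leq |w| \) and monotonicity of the real integral. Alternatively, this half is item (3) of Theorem \ref{thm:total-var-props} applied to the underlying (Lebesgue) measure.

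For the equality case, sufficiency is immediate. If \( g = |g|{\e}^{\i\theta} \) a.e., then \( \int g = {\e}^{\i\theta}\int |g| \), which has modulus \( \int |g| \).

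For necessity, suppose \( |\int g| = \int |g| \). If \( \int g = 0 \), then \( \int |g| = 0 \), so \( g = 0 \) a.e. and any \( \theta \) works. Otherwise take \( \theta \) as above. Then
\[
\int \bigl( |g| - \text{Re}({\e}^{-\i\theta} g) \bigr) = 0,
\]
and the integrand is non-negative, so \( \text{Re}({\e}^{-\i\theta} g) = |g| = |{\e}^{-\i\theta} g| \) a.e. A complex number \( w \) with \( \text{Re}\,w = |w| \) satisfies \( (\text{Im}\,w)^2 = |w|^2 - (\text{Re}\,w)^2 = 0 \), so \( w = |w| \geq 0 \). Hence \( {\e}^{-\i\theta} g = |g| \) a.e., that is, \( g = |g|{\e}^{\i\theta} \) a.e.

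The only real obstacle is the necessity step. One must handle the degenerate case \( \int g = 0 \) separately, and use the fact that a non-negative function with zero integral vanishes a.e. Everything else is a routine computation.
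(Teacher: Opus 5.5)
Your proof is correct, and it is the standard rotation argument, with the equality case handled correctly via the degenerate case $\int g = 0$ and the fact that a non-negative integrand with zero integral vanishes a.e. The paper gives no argument of its own and simply cites Folland, where this is the proof; one small quibble is that your alternative appeal to item (3) of Theorem~\ref{thm:total-var-props} should use the complex measure $\d\nu = g\,\d x$ with $f \equiv 1$, because Lebesgue (or counting) measure is infinite and hence not a complex measure in the paper's sense.
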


\begin{proof}
This is a fundamental inequality in real and complex analysis, for example, cf. \cite{Folland1999}.
\end{proof}

Applying Lemma \ref{lem:triangle-ineq-complex} with \( g(x) = p(x) {\e}^{\i\beta p(x)} \) immediately yields the bounds \( 0 \leq \CE_\beta(P) \leq 1 \).

\begin{theorem}[Maximum Condition] \label{thm:max-condition}
\( \CE_\beta(P) = 1 \) for all \( \beta > 0 \) if and only if \( P \) is a uniform distribution on a set of finite Lebesgue measure.
\end{theorem}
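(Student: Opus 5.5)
The plan is to prove both directions using the equality case of Lemma~\ref{lem:triangle-ineq-complex} applied to \( g(x) = p(x)\e^{\i\beta p(x)} \), for which \( \int |g| = \int p = 1 \).

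\emph{Sufficiency.} Suppose \( P \) is uniform on a set \( S \) with \( 0<\lambda(S)<\infty \), so \( p = c\,\mathbb{I}\{x\in S\} \) a.e. with \( c = 1/\lambda(S) \). Off \( S \) the integrand vanishes, so
\[
\int_\R p\,\e^{\i\beta p}\,\d x = \e^{\i\beta c}\int_S c\,\d x = \e^{\i\beta c}.
\]
Hence \( \CE_\beta(P)=1 \) for every \( \beta>0 \).

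\emph{Necessity.} In fact a single \( \beta>0 \) with \( \CE_\beta(P)=1 \) suffices. Equality \( |\int g| = \int |g| \) in Lemma~\ref{lem:triangle-ineq-complex} forces \( g(x) = |g(x)|\e^{\i\theta} \) a.e. for some constant \( \theta \). Since \( |g| = p \), this says \( \e^{\i\beta p(x)} = \e^{\i\theta} \) for a.e. \( x \) with \( p(x)>0 \). Equivalently, on \( S=\{p>0\} \) we have \( \beta p(x) \in \theta + 2\pi\mathbb{Z} \), so \( p \) takes values in the discrete set \( \{(\theta+2\pi k)/\beta\}\cap(0,\infty) \).

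For one \( \beta \) this does not make \( p \) constant. Here we use the hypothesis for all \( \beta \). Suppose \( p \) takes two distinct values \( a \neq b \), each on a set of positive measure. For each \( \beta \), the equality condition applied to those two sets gives \( \beta(a-b) \in 2\pi\mathbb{Z} \). Choosing \( \beta \) with \( \beta(a-b)/(2\pi) \notin \mathbb{Z} \), for instance \( \beta = \pi/|a-b| \), gives a contradiction.

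So \( p \) equals a single constant \( c>0 \) a.e. on \( S \). Then \( 1 = \int p = c\,\lambda(S) \), so \( \lambda(S) = 1/c < \infty \) and \( P \) is uniform on \( S \).

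\emph{Main obstacle.} The delicate step is the measure-theoretic bookkeeping in the necessity direction. The statement ``\( p \) takes two values on positive-measure sets'' needs care because \( p \) is only defined a.e. Also, the constant phase \( \theta \) depends on \( \beta \), so we should phrase the argument pointwise in \( \beta \). I will handle this by arguing as follows: if \( p \) is not a.e. constant on \( S \), there exist disjoint positive-measure sets \( A,B\subset S \) and numbers \( a<b \) with \( p\le a \) on \( A \) and \( p\ge b \) on \( B \). Then, for a single \( \beta \), the values of \( p \) on \( A \) and \( B \) must differ by a nonzero multiple of \( 2\pi/\beta \). Taking \( \beta \) small enough, precisely \( \beta < 2\pi/\sup p \) on \( S \), rules this out, because all values of \( \beta p \) then lie in an interval of length less than \( 2\pi \). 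If \( p \) is unbounded, I will first restrict to a bounded-value piece of positive measure.
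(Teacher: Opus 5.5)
Your proposal is correct and follows the paper's route. Both arguments apply the equality case of Lemma~\ref{lem:triangle-ineq-complex} at each $\beta$ and then exploit the freedom in $\beta$; your explicit choice $\beta=\pi/|a-b|$ (or a small $\beta<2\pi/M$ on a bounded-value piece) supplies the justification the paper compresses into ``since $\beta$ is arbitrary,'' and you check sufficiency directly rather than leaving it to the uniform-distribution computation.

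Please reword the opening claim that ``a single $\beta>0$ \ldots suffices.'' It holds only for the discreteness step: the paper's example with $\beta=2\pi$ and $p\in\{1/2,3/2\}$ shows that one $\beta$ cannot force uniformity.
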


\begin{proof}
By Lemma \ref{lem:triangle-ineq-complex}, equality \( \CE_\beta(P)=1 \) holds if and only if \( p(x) {\e}^{\i\beta p(x)} = p(x) {\e}^{\i\theta} \) for a constant \( \theta \). This implies \( {\e}^{\i\beta p(x)} = {\e}^{\i\theta} \) wherever \( p(x) > 0 \), meaning \( \beta p(x) = \theta + 2\pi k(x) \) for some integer-valued function \( k(x) \). Since \(\beta \) is arbitrary, \(p(x)\) must be constant on its support almost everywhere. Hence, \( P \) is uniform.
\end{proof}

Theorem \ref{thm:max-condition} states that the complex entropy of uniform distribution takes its maximum regardless of the value of \( \beta \). Note that for some specific \( \beta \) and distributions, the complex entropy can also achieve the maximum value \( 1 \). For example, if \( \beta = 2\pi \) and the PDF is as follows
$$
p(x) = \begin{cases}
    1/2, & 0 < x < 1/2, \\
    3/2, & 1/2 < x < 1,
\end{cases}
$$
then the corresponding complex entropy $\CE_{2\pi}(P) = 1$.

\subsection{Discrete Complex Entropy}
The concept extends naturally to discrete distributions, which are often encountered in practice.

\begin{definition}[Complex Entropy - Discrete] \label{def:complex-entropy-disc}
Let \( P \) be a discrete probability distribution with PMF \( p(x) \) defined on a countable set \( \mathcal{X} \). For a real parameter \( \beta > 0 \), the complex entropy (CE) of \( P \) is defined as
\[
\CE_\beta(P) = \left| \sum_{x \in \mathcal{X}} p(x) {\e}^{\i\beta p(x)} \right|.
\]
\end{definition}

The properties are analogous: \( 0 \leq \CE_\beta(P) \leq 1 \), with the maximum of 1 achieved if \( P \) is uniform on its (finite) support. For a Bernoulli(\( p \)) distribution, \( \CE_\beta(P) = |p {\e}^{\i\beta p} + (1-p) {\e}^{\i\beta (1-p)}| \), which is 1 for all $\beta > 0$ when \( p=0.5 \) (uniform on two points).

\begin{remark}
    For degenerate distributions \(P_{x_0}\) (where \(x_0 \in \R\)) with PMF \(p(x_0) = 1\), they are naturally included in the class of discrete distributions. By Definition \ref{def:complex-entropy-disc}, \(\CE_{\beta}(P_{x_0}) = 1\). This seems somewhat counterintuitive, since as a measure of uncertainty, complex entropy should be zero for any degenerate distribution, which is completely deterministic. However, from another perspective, a degenerate distribution \(P_{x_0}\) can be viewed as a uniform distribution on the singleton set \(\{x_0\}\), and in this sense, \(\CE_{\beta}(P_{x_0}) = 1\) is reasonable. In practical applications, whether to define complex entropy separately for degenerate distributions can be determined based on the specific context.
\end{remark}

\subsection{Examples and Limit Behavior}
To concretely illustrate the behavior of complex entropy, we examine two fundamental families of distributions: the uniform and the Gaussian. Furthermore, we analyze the limiting behavior of complex entropy as these distributions approach a degenerate (point-mass) distribution, highlighting the dependence of the limit on the mode of convergence.

\subsubsection{Uniform Distribution}
Let \( P \) be the uniform distribution on a finite interval \([a, b]\). Its PDF is \( p(x) = 1/(b-a) \cdot \mathbb{I}\{ x \in [a, b]\} \). The complex entropy is
\[
\CE_\beta(P) = \left| \int_a^b \frac{1}{b-a} {\e}^{\i\beta/(b-a)} \, \d x \right| = \left| {\e}^{\i\beta/(b-a)} \right| \cdot \frac{1}{b-a} \cdot (b-a) = 1.
\]
Thus, the complex entropy of any uniform distribution is exactly \( 1 \), independent of both the interval length and the parameter \(\beta\). This verifies that the uniform distribution attains the maximum possible complex entropy, consistent with Theorem \ref{thm:max-condition}.

\subsubsection{Gaussian Distribution and Approaching Degeneracy via Scaling}

Consider the family of normal distributions \( N(0, \sigma^2) \) with mean zero and variance \( \sigma^2 > 0 \). The PDF is \( p_\sigma(x) = \frac{1}{\sqrt{2\pi}\sigma} {\e}^{-x^2/(2\sigma^2)} \), \( x\in \R \). The complex entropy is
\[
\CE_\beta(P_\sigma) = \left| \int_{-\infty}^{\infty} \frac{1}{\sqrt{2\pi}\sigma} {\e}^{-x^2/(2\sigma^2)} \exp\left( \i\beta \frac{1}{\sqrt{2\pi}\sigma} {\e}^{-x^2/(2\sigma^2)} \right) \d x \right|.
\]
This integral does not have a simple closed form, but its asymptotic behavior as \( \sigma \to 0 \) (i.e., when the distribution becomes increasingly concentrated near zero) is analytically tractable and particularly insightful.

Perform the change of variable \( u = x/\sigma \). Then \( \d x = \sigma \d u \), and
\[
\CE_\beta(P_\sigma) = \left| \int_{-\infty}^{\infty} \frac{1}{\sqrt{2\pi}} {\e}^{-u^2/2} \exp\left( \i\beta \frac{1}{\sqrt{2\pi}\sigma} {\e}^{-u^2/2} \right) \d u \right|.
\]
Define \( \lambda = \frac{1}{\sqrt{2\pi}\sigma} \), which tends to infinity as \( \sigma \to 0 \). The integral becomes
\[
I(\lambda) := \int_{-\infty}^{\infty} \frac{1}{\sqrt{2\pi}} {\e}^{-u^2/2} \exp\left( \i \beta \lambda {\e}^{-u^2/2} \right) \d u.
\]
This is an oscillatory integral of the form \( \int g(u) {\e}^{\i\lambda h(u)} \d u \) with amplitude \( g(u) = \frac{1}{\sqrt{2\pi}} {\e}^{-u^2/2} \) and phase \( h(u) = \beta {\e}^{-u^2/2} \). The phase function \( h(u) \) has a unique non-degenerate stationary point at \( u=0 \), because \( h'(u) = -\beta u {\e}^{-u^2/2} \) vanishes only at \( u=0 \), and \( h''(0) = -\beta \neq 0 \). Moreover, \( g(0) = \frac{1}{\sqrt{2\pi}} \neq 0 \). As a result, applying the stationary phase method (cf. Lemma \ref{lem:stationary-phase} below) leads that, as \( \lambda \to \infty \), 
\begin{align*}
I(\lambda) & \sim g(0) {\e}^{\i\lambda h(0) + \i\frac{\pi}{4} \operatorname{sgn}(h''(0))} \sqrt{\frac{2\pi}{\lambda |h''(0)|}} \\
& = \frac{1}{\sqrt{2\pi}} {\e}^{\i\lambda\beta} {\e}^{-\i\frac{\pi}{4}\operatorname{sgn}(\beta)} \sqrt{\frac{2\pi}{\lambda |\beta|}} \\
& = \frac{1}{\sqrt{\lambda \beta}} {\e}^{\i(\lambda \beta - \frac{\pi}{4})},
\end{align*}
since \( \beta>0 \). Therefore, the modulus behaves as
\[
|I(\lambda)| \sim \frac{1}{\sqrt{\lambda \beta}} = \left( \frac{\beta}{\sqrt{2\pi}\sigma} \right)^{-1/2} = (2\pi)^{1/4} \sqrt{\frac{\sigma}{\beta}}.
\]
Consequently, as \( \sigma \to 0 \),
\[
\CE_\beta(P_\sigma) \sim (2\pi)^{1/4} \sqrt{\frac{\sigma}{\beta}} \to 0.
\]

\textbf{Interpretation:} When a sequence of Gaussian distributions with vanishing variance converges to a point mass at zero, their complex entropy converges to zero. This contradicts our definition of \( \CE_\beta(P) = 1 \) for degenerate distributions, but it aligns with the intuition that complex entropy should be zero for degenerate distributions.

\subsubsection{Contrast with Uniform Approximation to Degeneracy}

Consider an alternative approximation to a point mass at zero using uniform distributions. Let \( P_\epsilon \) be the uniform distribution on the interval \([-\epsilon/2, \epsilon/2]\), with PDF \( p_\epsilon(x) = 1/\epsilon \cdot \mathbb{I}\{x \in [-\epsilon/2, \epsilon/2]\}\). As computed earlier, \( \CE_\beta(P_\epsilon) = 1 \) for every \( \epsilon > 0 \). As \( \epsilon \to 0 \), the sequence \( \{P_\epsilon\} \) converges weakly to the point mass at zero. However, the complex entropy remains constantly \( 1 \) and does not approach \( 0 \).

This discrepancy highlights an important nuance: the limiting value of complex entropy for a degenerate limiting distribution depends on the mode of convergence of the approximating sequence. The sequence \( \{P_\epsilon\} \) converges weakly to the point mass \( \delta_0 \), and the limit of its complex entropy is \( 1 \), while the sequence \( \{P_\sigma\} \) also converges weakly to the point mass \( \delta_0 \), but the limit of its complex entropy is \( 0 \). This means that weak convergence cannot guarantee the convergence of complex entropy, prompting us to consider what kind of convergence can ensure the convergence of complex entropy. Proposition \ref{prop:basic-props-CE}\eqref{CE:continuity} below guarantees that if a sequence converges in total variation, then the complex entropy converges to that of the limiting distribution. Therefore, for any sequence converging in total variation to a degenerate distribution, the complex entropy will converge to that of the degenerate distribution. It is worth noting that there is no sequence of PDFs in the family of continuous distributions that converges in total variation to a Dirac function \(\delta\). But we can find sequences in discrete distribution family that converge in total variation to a degenerate distribution. Unlike continuous distribution family, degenerate distributions are naturally belong to discrete distribution family. And in that case, \(\CE_\beta(\delta_0) = 1\). For example, a sequence of Bernoulli distributions \(P_n \sim B(1, \frac{1}{n})\) converges in total variation to the point mass \(\delta_0\), therefore \( \CE_\beta(P_n) \to \CE_\beta(\delta_0) = 1 \) as \( n \to \infty \).

\subsection{Comparative Analysis with Shannon Entropy}

It is instructive to compare complex entropy with the cornerstone of information theory, the Shannon entropy \( H(P) = -\int_{\R} p(x) \log p(x) \d x \) for continuous \( P \) (or \( -\sum_{x \in \mathcal{X}} p(x) \log p(x) \) for discrete \( P \)). See Table \ref{tab:entropy-comparison} for details.

\begin{sidewaystable}[thp]
\centering
\caption{Comparison between Shannon Entropy and Complex Entropy}
\label{tab:entropy-comparison}
\begin{tabular}{ccc}
\toprule
\textbf{Aspect} & \textbf{Shannon Entropy \( H(P) \)} & \textbf{Complex Entropy \( \CE_\beta(P) \)} \\
\midrule
Primary Interpretation & Average information content or uncertainty & Degree of uniformity via phase coherence \\
Range (Finite Support \( N \)) & \( [0, \log N] \) & \( [0, 1] \) (Normalized) \\
Maximum Achieved When & Uniform distribution & Uniform distribution \\
Minimum Achieved When & Any degenerate distribution & \(N(x_0, \sigma^2)\) as \(\sigma \to 0\) \\
Additivity & Additive for independent distributions & Not additive due to nonlinear phase term \\
Parameter & None (intrinsic) & Sensitivity parameter \( \beta > 0 \) \\
Sensitivity & Sensitive to entire distribution via \( \log \) & Sensitivity tuned by \( \beta \); shape-sensitive \\
\bottomrule
\end{tabular}
\end{sidewaystable}

In essence, Shannon entropy measures how much we don't know, while complex entropy measures how evenly spread the known probabilities are, as reflected in their ability to constructively interfere in a specific phase space. They are complementary measures capturing different aspects of a distribution's structure.

\subsection{Additional Properties and Proofs}

\begin{lemma}[Convexity of the Phase Transformation] \label{lem:phase-convexity}
The mapping \( \phi: [0, \infty) \to \mathbb{C} \) defined by \( \phi(t) = t {\e}^{\i\beta t} \) is not convex in the usual sense as a complex-valued function. However, its real and imaginary parts, \( \phi_r(t) = t \cos(\beta t) \) and \( \phi_i(t) = t \sin(\beta t) \), are infinitely differentiable. For small \( \beta \), \( \phi(t) \approx t (1 + \i\beta t) \), so the imaginary part grows quadratically with \( t \). 
\end{lemma}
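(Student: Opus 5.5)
The lemma bundles three claims, and I will verify each directly from the definition \( \phi(t) = t\,\e^{\i\beta t} \), treating the real part \( \phi_r \) and imaginary part \( \phi_i \) separately.

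\emph{Convexity is not meaningful.} First I will fix what ``not convex in the usual sense'' should mean. Convexity of a function requires an ordered codomain, so the inequality \( \phi(\lambda s + (1-\lambda)t) \le \lambda\phi(s) + (1-\lambda)\phi(t) \) has no meaning in \( \C \). The natural substitute is that both \( \phi_r \) and \( \phi_i \) are convex, and I will show this fails. Differentiating gives
\[
\phi_r''(t) = -2\beta\sin(\beta t) - \beta^2 t\cos(\beta t).
\]
On \( (0, \pi/(2\beta)) \) both terms are negative, so \( \phi_r \) is strictly concave there. For large \( t \) the second derivative changes sign infinitely often, so \( \phi_r \) is neither convex nor concave on \( [0,\infty) \). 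The same computation for \( \phi_i \) gives
\[
\phi_i''(t) = 2\beta\cos(\beta t) - \beta^2 t\sin(\beta t),
\]
which also changes sign. This settles the negative statement.

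\emph{Smoothness.} The functions \( t \), \( \cos(\beta t) \) and \( \sin(\beta t) \) are all \( C^\infty \), and products of \( C^\infty \) functions are \( C^\infty \). Hence \( \phi_r \) and \( \phi_i \) are infinitely differentiable. Leibniz's rule also gives explicit \( n \)th derivatives if they are wanted.

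\emph{Small-\( \beta \) behaviour.} This is the only part that needs care, because ``\( \approx \)'' has to be given a precise meaning. I will expand \( \e^{\i\beta t} = 1 + \i\beta t + R(\beta t) \) with \( |R(s)| \le s^2/2 \). This bound follows from Taylor's theorem with remainder, applied to \( \cos \) and \( \sin \) or to the complex exponential along the segment \( [0,s] \). It gives
\[
\phi(t) = t + \i\beta t^2 + E(t), \qquad |E(t)| \le \tfrac12 \beta^2 t^3,
\]
so the error is \( O(\beta^2) \) uniformly for \( t \) in bounded sets. The statement is therefore only meaningful for \( \beta t \) small. In the intended use \( t = p(x) \), this requires a bounded density and \( \beta \sup p \ll 1 \), and I will state that restriction explicitly.

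Reading off imaginary parts, \( \phi_i(t) = \beta t^2 + O(\beta^3 t^4) \), since \( \sin(\beta t) - \beta t = O((\beta t)^3) \). So to leading order the imaginary part grows quadratically in \( t \), while \( \phi_r(t) = t + O(\beta^2 t^3) \) is essentially linear. The main obstacle is not technical: it is ensuring that this expansion is stated with an explicit remainder and the condition \( \beta t \ll 1 \), rather than as an unqualified approximation.
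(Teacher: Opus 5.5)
Your proposal is correct. For the substantive small-\(\beta\) claim it uses the same idea as the paper's one-line proof (\(\cos(\beta t)\approx 1\), \(\sin(\beta t)\approx\beta t\)), but makes it quantitative via \(|E(t)|\le\tfrac12\beta^2t^3\) and \(\phi_i(t)=\beta t^2+O(\beta^3t^4)\). You also supply two things the paper's proof leaves unaddressed: a precise componentwise reading of ``not convex'', verified by \(\phi_r''<0\) on \((0,\pi/(2\beta))\), and the elementary smoothness argument.
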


\begin{proof}
This conclusion can be directly derived from the fact that for small \(\beta\), \(\cos(\beta t) \approx 1\) and \(\sin(\beta t) \approx \beta t\).
\end{proof}

\begin{proposition}[Basic Properties of Complex Entropy] \label{prop:basic-props-CE}
Let \( P \) be a probability measure with PDF (or PMF) \( p \).
\begin{enumerate}[(1)]
    \item\label{CE:translation-invariance} \textbf{Translation Invariance:} If \( Q \) has PDF (or PMF) \( q(x) = p(x+a) \), then \( \CE_\beta(Q) = \CE_\beta(P) \).
    \item\label{CE:scaling} \textbf{Scaling:} If \( Q \) has PDF (or PMF) \( q(x) = a p(ax) \) for \( a>0 \), then \( \CE_\beta(Q) = \CE_{a\beta}(P) \).
    \item\label{CE:small-beta-limit} \textbf{Small \( \beta \) Limit:} \( \lim\limits_{\beta \to 0} \CE_\beta(P) = 1 \).
    \item\label{CE:continuity} \textbf{Continuity:} If a sequence of PDFs (or PMFs) \( \{p_n\} \) converges to \( p \) in total variation (i.e., \( \int_{\R} |p_n(x) - p(x)| \, \d x \to 0 \) or \( \sum_{x \in \mathcal{X}} |p_n(x) - p(x)| \to 0 \)) as \( n \to \infty \), then \( \lim\limits_{n \to \infty} \CE_\beta(P_n) = \CE_\beta(P) \).
\end{enumerate}
\end{proposition}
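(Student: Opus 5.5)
I will treat the four items separately, handling the continuous and discrete cases in parallel; in the discrete case integrals become sums over the countable set \( \mathcal{X} \) and changes of variable become reindexings.

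For \eqref{CE:translation-invariance}, I substitute \( y = x+a \) in \( \int_{\R} p(x+a)\e^{\i\beta p(x+a)}\,\d x \). In the discrete case I reindex the sum by the bijection \( x \mapsto x+a \) of the support. The value is unchanged, and so is its modulus.

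For \eqref{CE:scaling}, I write \( \int a p(ax)\e^{\i\beta a p(ax)}\,\d x \) and substitute \( y = ax \), so \( \d y = a\,\d x \). This gives \( \int p(y)\e^{\i(a\beta)p(y)}\,\d y \), whose modulus is \( \CE_{a\beta}(P) \). In the discrete case, \( a p(ax) \) is only a PMF when the normalization survives the rescaling. I will therefore state this item for densities, and interpret the discrete version through the same reindexing under the stated hypothesis that \( q \) is a PMF.

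For \eqref{CE:small-beta-limit}, I use dominated convergence. The integrand \( p(x)\e^{\i\beta p(x)} \) converges pointwise to \( p(x) \) as \( \beta\to0 \), and it is dominated by \( p \in L^1 \). Hence \( \int p\,\e^{\i\beta p} \to \int p = 1 \), and taking moduli gives the limit \( 1 \). For a quantitative version one could instead use \( |\e^{\i\beta p}-1|\le \beta p \). That bound requires \( p\in L^2 \), which need not hold, so dominated convergence is the safer route.

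For \eqref{CE:continuity}, which I expect to be the only real step, the key is that \( \phi(t)=t\e^{\i\beta t} \) is Lipschitz on \( [0,\infty) \)? It is not globally Lipschitz, since \( \phi'(t)=\e^{\i\beta t}(1+\i\beta t) \) grows. So a direct bound \( |\phi(p_n)-\phi(p)|\le L|p_n-p| \) fails, and I will split the difference instead:
\[
p_n\e^{\i\beta p_n}-p\,\e^{\i\beta p} = (p_n-p)\e^{\i\beta p_n} + p\left(\e^{\i\beta p_n}-\e^{\i\beta p}\right).
\]
The first term has integral at most \( \int|p_n-p|\to0 \). For the second term I use \( |\e^{\i\beta p_n}-\e^{\i\beta p}|\le \min(2,\beta|p_n-p|) \), which gives the pointwise bound \( p\min(2,\beta|p_n-p|)\le 2p \).

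Convergence in \( L^1 \) does not give pointwise convergence, so I argue along subsequences. Every subsequence has a further subsequence along which \( p_n\to p \) a.e. Along it, dominated convergence with dominating function \( 2p \) sends the second integral to \( 0 \). Since every subsequence has such a sub-subsequence, the whole sequence converges. Finally \( \bigl||z_n|-|z|\bigr|\le|z_n-z| \) transfers the convergence to the moduli, giving \( \CE_\beta(P_n)\to\CE_\beta(P) \).

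In the discrete case, \( \ell^1 \) convergence already gives pointwise convergence, so dominated convergence for sums applies directly with dominating sequence \( 2p(x) \).
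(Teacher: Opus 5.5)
Your proof is correct. For (1)--(3) the paper calls the arguments straightforward and omits them, and your change of variables and dominated convergence supply them. Your caution about the discrete form of (2) is justified. With the natural domain, $\sum_x a\,p(ax)=a$, so requiring $q$ to be a PMF forces $a=1$ and the discrete claim becomes trivial. For general $a$, reindexing has no Jacobian and gives $a\,\CE_{a\beta}(P)$ instead.

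For (4) you take a genuinely different route from the paper. The paper argues as follows:
\begin{itemize}
\item it truncates with $g_M(t)=\min(t,M)$;
\item it uses uniform integrability of the whole sequence $\{p_n\}$ to make $\int_{\{p_n>M\}}p_n$ small uniformly in $n$;
\item it uses that $\phi(t)=t\e^{\i\beta t}$ is Lipschitz with constant $1+\beta M$ on $[0,M]$;
\item it finishes with an $\varepsilon$-argument.
\end{itemize}
Your splitting $(p_n-p)\e^{\i\beta p_n}+p(\e^{\i\beta p_n}-\e^{\i\beta p})$ puts the nonlinearity into a factor bounded by $2$, multiplied by the fixed integrable $p$. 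So you need neither uniform integrability of $\{p_n\}$ nor Lipschitz estimates on $\phi$. The cost is the subsequence and almost-everywhere convergence step.

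That step can also be removed. Splitting over $\{p\le M\}$ gives $\int p\min(2,\beta|p_n-p|)\le \beta M\int|p_n-p|+2\int_{\{p>M\}}p$, which truncates only the limit $p$.

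Like the paper's argument, yours actually bounds $\int|\phi(p_n)-\phi(p)|$. It therefore also gives the stronger $L^1$ convergence recorded in Remark~\ref{rem:continuity}, which the later continuity result for $\CM_\beta$ relies on.
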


\begin{proof}
The proofs of \eqref{CE:translation-invariance}--\eqref{CE:small-beta-limit} are straightforward and thus omitted. In what follows, we only provide the proof of \eqref{CE:continuity} for the case where \( \{p_n\} \) are PDFs, since the case where \( \{p_n\} \) are PMFs can be handled similarly. 

Define \(\phi(t) = t {\e}^{\i \beta t}\) for \(t \ge 0\). Then
\[
\CE_\beta(P) = \left| \int_{\R} \phi(p(x)) \, \d x \right|.
\]
Since the modulus function \(|\cdot|\) is continuous, it suffices to show
\[
\lim_{n \to \infty} \int_{\R} \phi(p_n(x)) \, \d x = \int_{\R} \phi(p(x)) \, \d x.
\]
For any \(M > 0\), define the truncation function \(g_M(t) = \min(t, M)\). Denote
\[
p_n^M(x) = g_M(p_n(x)), \quad p^M(x) = g_M(p(x)).
\]
We decompose the difference as follows:
\begin{align*}
& \left| \int \phi(p_n) - \int \phi(p) \right| \\
\le & \left| \int \phi(p_n) - \int \phi(p_n^M) \right|  + \left| \int \phi(p_n^M) - \int \phi(p^M) \right| + \left| \int \phi(p^M) - \int \phi(p) \right|.
\end{align*}
We shall bound these three terms separately. For the first term, using the simple bound \(|\phi(t) - \phi(g_M(t))| \le 2t \mathbb{I}{\{t > M\}}\), we have
\[
\left| \int \phi(p_n) - \int \phi(p_n^M) \right|
\le \int_{\R} | \phi(p_n) - \phi(p_n^M) |
\le 2\int_{\{p_n > M\}} p_n(x) \, \d x.
\]
Similarly, for the third term,
\[
\left| \int \phi(p^M) - \int \phi(p) \right|
\le 2\int_{\{p > M\}} p(x) \, \d x.
\]
Since \(p_n \to p\) in total variation and \(p_n, p \ge 0\) with \(\int p_n = \int p = 1\), the sequence \(\{p_n\}\) is uniformly integrable. Hence, for every \(\varepsilon > 0\), there exists \(M := M(\varepsilon) > 0\) such that
\[
\sup_{n \ge 1} \int_{\{p_n > M\}} p_n(x) \, \d x < \varepsilon,
\quad \text{and} \quad
\int_{\{p > M\}} p(x) \, \d x < \varepsilon.
\]
Fix such an \( M \) for a given \(\varepsilon\). Now consider the middle term. Since \(|p_n^M| \le M\) and \(|p^M| \le M\), the functions \(\phi(p_n^M)\) and \( \phi(p^M) \) are uniformly bounded by \(M\). Moreover, because \(|p_n^M - p^M| \le |p_n - p|\), total variation convergence implies
\[
\int \vert p_n^M - p^M\vert \le \int \vert p_n - p\vert \to 0.
\]
Therefore,
\[
\int |\phi(p_n^M) - \phi(p^M)| 
\le (1 + \beta M) \int \vert p_n^M - p^M\vert \to 0,
\]
where we used the Lipschitz continuity of \(\phi\) on \([0, M]\) with constant \(1 + \beta M\). Consequently,
\[
\left| \int \phi(p_n^M) - \int \phi(p^M) \right| \to 0.
\]
Combining these estimates, for the fixed \(M\) chosen previously,
\[
\limsup_{n\to\infty} \left| \int \phi(p_n) - \int \phi(p) \right|
\le 2\varepsilon + 0 + 2\varepsilon = 4\varepsilon.
\]
Since \(\varepsilon > 0\) is arbitrary, we conclude that
\[
\int_{\R} \phi(p_n(x)) \, \d x \to \int_{\R} \phi(p(x)) \, \d x.
\]
Finally,
\[
\CE_\beta(P_n) = \left| \int \phi(p_n) \right| \to \left| \int \phi(p) \right| = \CE_\beta(P),
\]
by continuity of the modulus function. This completes the proof.
\end{proof}

\begin{remark}\label{rem:continuity}
    In fact, we can get a stronger result through the same technique used in the proof of Proposition \ref{prop:basic-props-CE}\eqref{CE:continuity} that \(\{\phi(p_n)\}\) converges to \(\phi(p)\) in total variation if \(\{p_n\}\) converges to \(p\) in total variation by decomposing the difference as follows:
    \begin{align*}
    &\int \left\vert \phi(p_n) - \phi(p) \right\vert \\
    \leq & \int \left\vert \phi(p_n) - \phi(p_n^M) \right\vert + \int \left\vert \phi(p_n^M) - \phi(p^M) \right\vert + \int \left\vert \phi(p^M) - \phi(p) \right\vert.
    \end{align*}
\end{remark}

\begin{lemma}[Monotonicity under scaling] \label{lem:scaling}
Let \( Q \) be a probability distribution with PDF \( q(x) \). If for all \( x \), \( \beta q(x) < \pi \), then for any \( \gamma \in (0,1] \),
\[
\CE_{\gamma\beta}(Q) \geq \CE_{\beta}(Q).
\]
Moreover, the equality holds if and only if \( q(x) \) is constant on its support.
\end{lemma}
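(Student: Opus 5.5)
The plan is to square the complex entropy and rewrite it as a double integral in which $\beta$ appears only through a cosine of phase differences. For any $\beta>0$, expanding $|z|^2=z\bar z$ with $z=\int q(x)\e^{\i\beta q(x)}\,\d x$ gives, by Fubini (justified since $|q(x)q(y)|$ is integrable on $\R^2$),
\[
\CE_\beta(Q)^2=\int_{\R}\int_{\R} q(x)q(y)\,\e^{\i\beta(q(x)-q(y))}\,\d x\,\d y=\int_{\R}\int_{\R} q(x)q(y)\cos\bigl(\beta(q(x)-q(y))\bigr)\,\d x\,\d y .
\]
The imaginary part vanishes because the sine term is antisymmetric under swapping $x$ and $y$. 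The same identity holds with $\beta$ replaced by $\gamma\beta$. Since both entropies are nonnegative, it then suffices to compare the two double integrals.

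The key step is a pointwise comparison of the integrands. Only pairs $(x,y)$ with $q(x)q(y)>0$ contribute. For such pairs the hypothesis $\beta q<\pi$ and $q\ge 0$ give $|\beta(q(x)-q(y))|<\pi$, and hence $|\gamma\beta(q(x)-q(y))|\le|\beta(q(x)-q(y))|<\pi$ for $\gamma\in(0,1]$. Because $\cos$ is even and nonincreasing on $[0,\pi]$, we get
\[
\cos\bigl(\gamma\beta(q(x)-q(y))\bigr)\ge\cos\bigl(\beta(q(x)-q(y))\bigr).
\]
Multiplying by the nonnegative weight $q(x)q(y)$ and integrating yields $\CE_{\gamma\beta}(Q)^2\ge\CE_\beta(Q)^2$. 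Taking square roots proves the inequality.

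For the equality case, if $q$ is constant on its support $S$, then by Theorem~\ref{thm:max-condition} (or the direct uniform computation) both sides equal $1$. Conversely, the difference of the two double integrals is the integral of a nonnegative function, so equality forces that function to vanish a.e. on $S\times S$. If $\gamma<1$, the cosine is strictly decreasing on $[0,\pi]$, so $\cos(\gamma\beta d)=\cos(\beta d)$ with $|\beta d|<\pi$ forces $d=0$. Hence $q(x)=q(y)$ for a.e. $(x,y)\in S\times S$, i.e.\ $q$ is a.e. constant on its support.

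The main obstacle is exactly this equality statement: for $\gamma=1$ equality holds trivially for every $q$. I would therefore state and prove the ``only if'' direction for $\gamma\in(0,1)$, noting that this is the intended reading of the lemma. The remaining care points are the a.e.\ qualifiers and the Fubini justification, both of which are routine.
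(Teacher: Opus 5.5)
Your proof is correct and follows essentially the paper's route: both write $\CE_{\gamma\beta}(Q)^2$ as $\iint q(x)q(y)\cos(\gamma\beta[q(x)-q(y)])\,\d x\,\d y$. Where the paper differentiates in $\gamma$ and uses $t\sin(\gamma\beta t)\ge 0$ for $|\beta t|<\pi$, you compare the cosines pointwise, which is the integrated form of the same fact and spares you from justifying differentiation under the integral sign. Your remark that the ``only if'' direction must be read for $\gamma\in(0,1)$ is correct, since at $\gamma=1$ equality is trivial, and the paper's proof tacitly makes the same restriction; your fixed-$\gamma$ equality argument is also slightly cleaner than the paper's appeal to $h'(\gamma)=0$ for all $\gamma\in(0,1)$.
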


\begin{proof}
Define
\[
h(\gamma) = \CE_{\gamma\beta}(Q)^2 = \left| \int_{\R} q(x) {\e}^{\i\gamma\beta q(x)} \, \d x \right|^2.
\]
Expanding the square modulus, we have 
\[
h(\gamma) = \iint_{\R^2} q(x) q(y) \cos(\gamma\beta [q(x)-q(y)]) \, \d x \d y.
\]
Differentiating with respect to \( \gamma \) leads to
\[
h'(\gamma) = -\beta \iint_{\R^2} q(x) q(y) [q(x)-q(y)] \sin(\gamma\beta [q(x)-q(y)]) \, \d x \d y.
\]

Let \( t = q(x)-q(y) \), and analyze the sign of \( t \sin(\gamma\beta t) \) as follows:
\begin{itemize}
    \item If \( t > 0 \), then since \( \beta q(x) < \pi \), \( \beta q(y) < \pi \) and \(\beta > 0\), we have \( 0 < \beta t < \pi \). For \( \gamma \in (0,1] \), \( 0 < \gamma\beta t < \pi \), hence \( \sin(\gamma\beta t) > 0 \), implying \( t \sin(\gamma\beta t) > 0 \).
    
    \item If \( t < 0 \), similarly \( -\pi < \beta t < 0 \), so \( -\pi < \gamma\beta t < 0 \), hence \( \sin(\gamma\beta t) < 0 \). But \( t < 0 \), so the product \( t \sin(\gamma\beta t) > 0 \).
    
    \item If \( t = 0 \), the term \( t \sin(\gamma\beta t) \) is zero.
\end{itemize}
Therefore, for all pairs \( (x,y) \), \( t \sin(\gamma\beta t) \geq 0 \), which implies \( h'(\gamma) \leq 0 \). Hence \( h(\gamma) \) is non-increasing on \( (0,1] \), so \( \CE_{\gamma\beta}(Q) \) is non-increasing as well. In particular, for \( \gamma \in (0,1) \), \( \CE_{\gamma\beta}(Q) \geq \CE_{\beta}(Q) \).

Equality occurs if and only if \( h'(\gamma) = 0 \) for all \( \gamma \in (0,1) \). From the expression for \( h'(\gamma) \), this requires \( t \sin(\gamma\beta t) = 0 \) for all \( x,y \) and all \( \gamma \in (0,1) \). Under the condition \( \beta q(x) < \pi \), this forces \( t = 0 \) for all pairs, i.e., \( q(x) \) is constant on its support.
\end{proof}

\begin{theorem}[Mixing Effect for Complex Entropy] \label{thm:mixing}
Let \( P_1 \) and \( P_2 \) be two probability distributions with PDFs (or PMFs) \( p_1(x) \) and \( p_2(x) \), and with disjoint supports \( S_1 \) and \( S_2 \), respectively. Let \( P = \alpha P_1 + (1-\alpha)P_2 \) be their convex mixture, where \( \alpha \in (0,1) \). Then, for any \( \beta > 0 \), we have
    \[
    \CE_\beta(P) \leq \alpha \CE_{\alpha\beta}(P_1) + (1-\alpha) \CE_{(1-\alpha)\beta}(P_2).
    \]
\end{theorem}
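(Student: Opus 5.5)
Since the supports \(S_1\) and \(S_2\) are disjoint, the density (or mass function) of the mixture splits cleanly: \(p(x) = \alpha p_1(x)\) on \(S_1\), \(p(x) = (1-\alpha) p_2(x)\) on \(S_2\), and \(p(x)=0\) elsewhere. The plan is to write the defining integral (or sum) of \(\CE_\beta(P)\) as the sum of the contributions from \(S_1\) and \(S_2\), identify each contribution as a rescaled complex entropy of \(P_1\) or \(P_2\), and finish with the triangle inequality.

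\textbf{Step 1 (splitting the integral).} Using additivity of the integral over the disjoint sets \(S_1\), \(S_2\) and the fact that the integrand vanishes outside \(S_1\cup S_2\), we get
\[
\int_{\R} p(x) {\e}^{\i\beta p(x)} \, \d x = \alpha \int_{S_1} p_1(x) {\e}^{\i\alpha\beta p_1(x)} \, \d x + (1-\alpha) \int_{S_2} p_2(x) {\e}^{\i(1-\alpha)\beta p_2(x)} \, \d x .
\]
The same identity holds for sums in the discrete case.

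\textbf{Step 2 (identifying the pieces).} Since \(p_j\) vanishes off \(S_j\), the integral over \(S_j\) equals the integral over \(\R\). By Definition \ref{def:complex-entropy-cont} (or Definition \ref{def:complex-entropy-disc}), the moduli of the two integrals are therefore exactly \(\CE_{\alpha\beta}(P_1)\) and \(\CE_{(1-\alpha)\beta}(P_2)\). Both parameters are positive because \(\alpha\in(0,1)\) and \(\beta>0\), so these quantities are well defined.

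\textbf{Step 3 (triangle inequality).} Apply \(|z_1+z_2|\le|z_1|+|z_2|\) together with \(|\alpha z|=\alpha|z|\) for \(\alpha>0\). This yields the claimed bound
\[
\CE_\beta(P) \le \alpha\, \CE_{\alpha\beta}(P_1) + (1-\alpha)\, \CE_{(1-\alpha)\beta}(P_2).
\]

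There is no real obstacle here. The only point needing care is Step 1: the density of the mixture must be \(\alpha p_1\) on \(S_1\), not \(\alpha p_1 + (1-\alpha)p_2\). This requires that \(p_2 = 0\) on \(S_1\), which holds up to a null set. If "support" is taken to mean the closed support, the overlap of the two supports has measure zero, or it contains no atoms in the discrete case. Either way the integral is unaffected.
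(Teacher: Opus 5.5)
Your proof is correct and matches the paper's argument: the disjoint supports give the piecewise form of the mixture density, so the integral splits over \(S_1\) and \(S_2\). The triangle inequality, together with pulling out the positive factors \(\alpha\) and \(1-\alpha\), then identifies the two terms as \(\alpha\,\CE_{\alpha\beta}(P_1)\) and \((1-\alpha)\,\CE_{(1-\alpha)\beta}(P_2)\); your closing caveat about overlapping closed supports is not needed, since disjointness is already part of the hypothesis.
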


\begin{proof}
We only provide the proof for the case where \( p_1(x) \) and \( p_2(x) \) are PDFs, since the case where \( p_1(x) \) and \( p_2(x) \) are PMFs can be proved similarly.

Since the supports are disjoint, the PDF of the mixture is
\[
p(x) = 
\begin{cases}
\alpha p_1(x), & x \in S_1, \\
(1-\alpha) p_2(x), & x \in S_2, \\
0, & \text{otherwise}.
\end{cases}
\]
The complex entropy is therefore
\begin{align*}
\CE_\beta(P) = & \left| \int_{S_1 \cup S_2} p(x) {\e}^{\i\beta p(x)} \, \d x \right| \\
= & \left| \int_{S_1} \alpha p_1(x) {\e}^{\i\beta \alpha p_1(x)} \, \d x + \int_{S_2} (1-\alpha) p_2(x) {\e}^{\i\beta (1-\alpha) p_2(x)} \, \d x \right|.
\end{align*}
In light of the triangle inequality for the modulus of complex numbers, we have
\[
\CE_\beta(P) \leq \left| \int_{S_1} \alpha p_1(x) {\e}^{i\beta \alpha p_1(x)} \, \d x \right| + \left| \int_{S_2} (1-\alpha) p_2(x) {\e}^{i\beta (1-\alpha) p_2(x)} \, \d x \right|.
\]
Since \( \alpha > 0 \), we have
\[
\left| \int_{S_1} \alpha p_1(x) {\e}^{i\beta \alpha p_1(x)} \, \d x \right| = \alpha \left| \int_{S_1} p_1(x) {\e}^{i\beta \alpha p_1(x)} \, \d x \right| = \alpha \CE_{\alpha\beta}(P_1).
\]
Similarly,
\[
\left| \int_{S_2} (1-\alpha) p_2(x) {\e}^{i\beta (1-\alpha) p_2(x)} \, \d x \right| = (1-\alpha) \CE_{(1-\alpha)\beta}(P_2).
\]
Consequently,
\[
\CE_\beta(P) \leq \alpha \CE_{\alpha\beta}(P_1) + (1-\alpha) \CE_{(1-\alpha)\beta}(P_2),
\]
which completes the proof.
\end{proof}

Theorem \ref{thm:mixing} states that the complex entropy of a mixture is upper-bounded by a weighted average of the scaled complex entropies of the components, where the scaling factor is the mixing coefficient. From a physical perspective, mixing two distributions generally introduces phase interference, thereby reducing the overall coherence measured by complex entropy. Only when both components are uniform and their phases align perfectly does mixing preserve coherence.

\begin{corollary}\label{cor:mixing}
If \( P_1 \) and \( P_2 \) are both uniform distributions on their respective supports, then under the conditions of Theorem \ref{thm:mixing},
\[
\CE_\beta(P) \leq \alpha \CE_\beta(P_1) + (1-\alpha) \CE_\beta(P_2) = \alpha \cdot 1 + (1-\alpha) \cdot 1 = 1.
\]
\end{corollary}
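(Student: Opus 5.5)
The plan is to specialize Theorem \ref{thm:mixing} and then evaluate the scaled complex entropies of the uniform components directly. The corollary as written applies the mixing bound with parameter $\beta$ rather than $\alpha\beta$ and $(1-\alpha)\beta$. For uniform components this makes no difference, because a uniform distribution has complex entropy equal to $1$ for every positive sensitivity parameter. So the first step is to invoke Theorem \ref{thm:mixing}, which gives
\[
\CE_\beta(P) \leq \alpha \CE_{\alpha\beta}(P_1) + (1-\alpha) \CE_{(1-\alpha)\beta}(P_2).
\]

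The second step is to show $\CE_{\gamma}(P_j)=1$ for every $\gamma>0$ and $j=1,2$. Suppose $P_j$ is uniform on a support $S_j$ of finite positive Lebesgue measure, so $p_j = c_j\,\mathbb{I}\{x\in S_j\}$ with $c_j = 1/|S_j|$. Then
\[
\int p_j e^{\i\gamma p_j}\,\d x = e^{\i\gamma c_j}\int_{S_j} c_j\,\d x = e^{\i\gamma c_j},
\]
which has modulus $1$. This is the same computation as in the uniform-distribution example, and it is also the ``if'' direction of Theorem \ref{thm:max-condition}. The discrete case is identical, with the integral replaced by a finite sum over the support. Taking $\gamma=\alpha\beta$, $\gamma=(1-\alpha)\beta$, and $\gamma=\beta$ gives
\[
\CE_{\alpha\beta}(P_1)=\CE_\beta(P_1)=1, \qquad \CE_{(1-\alpha)\beta}(P_2)=\CE_\beta(P_2)=1.
\]

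The third step is substitution. The right-hand side of the mixing bound becomes $\alpha\cdot 1+(1-\alpha)\cdot 1 = 1$, which is the stated chain of equalities.

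The only point needing care is the mismatch between $\CE_{\alpha\beta}$ and $\CE_\beta$. It is resolved by the $\beta$-independence of the complex entropy of uniform laws. Nothing else is delicate: the bound $\CE_\beta(P)\le 1$ is already implied by Lemma \ref{lem:triangle-ineq-complex}, so the corollary mainly records the equality form of the weighted average.
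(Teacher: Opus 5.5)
Your proposal is correct and matches the paper's argument, which the paper leaves implicit: apply Theorem \ref{thm:mixing}, then use the uniform-distribution computation showing that a uniform law has complex entropy $1$ for every positive parameter. That fact is what reconciles $\CE_{\alpha\beta}(P_1)$ and $\CE_{(1-\alpha)\beta}(P_2)$ with the $\CE_\beta$ written in the statement, and you handle this parameter mismatch explicitly and correctly where the paper leaves it unstated.
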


\begin{remark}
In fact, the mixture of two uniform distributions on disjoint sets with different measures is not uniform in most cases, so typically \( \CE_\beta(P) < 1 \), that is, the inequality in Corollary \ref{cor:mixing} is strict in most cases.
\end{remark}

\begin{lemma}[Stationary Phase Approximation, cf. \cite{Wong1989}] \label{lem:stationary-phase}
Consider the integral
\[
I(\lambda) = \int_a^b f(x) {\e}^{\i\lambda g(x)} \, \d x,
\]
where \( a, b \in \R \), real functions \( f, g \in C^2([a,b]) \), \( \lambda > 0 \). If \( g \) has a unique stationary point \( c \) in \( (a, b) \) such that
\[
g'(c) = 0, \; g''(c) \neq 0, \text{ and } f(c) \neq 0,
\]
then the integration \( I(\lambda) \) has the stationary phase approximation
\[
I(\lambda) \sim f(c) {\e}^{\i\lambda g(c) + \i\frac{\pi}{4} \operatorname{sgn}(g''(c))} \sqrt{\frac{2\pi}{\lambda |g''(c)|}}, \quad \text{as } \lambda \to \infty.
\]
\end{lemma}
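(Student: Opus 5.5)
I will prove Lemma \ref{lem:stationary-phase} by the classical route: localize around the stationary point $c$, reduce the phase there to an exact quadratic by a change of variables (Morse lemma), and evaluate the resulting Fresnel-type integral. Everything away from $c$ is a lower-order error handled by integration by parts.

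\textbf{Step 1 (localization).} Fix a smooth cutoff $\chi$ that is supported in a small interval $(c-\delta,c+\delta)\subset(a,b)$ and equals $1$ near $c$. Split $I(\lambda)=I_1(\lambda)+I_2(\lambda)$ with amplitudes $f\chi$ and $f(1-\chi)$. On the support of $1-\chi$ we have $g'\neq 0$, hence $|g'|\ge m>0$ by compactness, since $c$ is the unique stationary point. Integrate by parts once using
\[
{\e}^{\i\lambda g}=\frac{1}{\i\lambda g'}\frac{\d}{\d x}{\e}^{\i\lambda g}.
\]
This gives $I_2(\lambda)=O(\lambda^{-1})$, which is $o(\lambda^{-1/2})$. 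Two remarks on this step. The boundary terms at $a,b$ also contribute $O(\lambda^{-1})$; they do not vanish, but they are still negligible. The integration by parts needs $(f(1-\chi)/g')'$ to be integrable, which $f,g\in C^2$ provides.

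\textbf{Step 2 (Morse change of variables).} Write $g(x)-g(c)=\tfrac12 g''(c)(x-c)^2 h(x)$, where
\[
h(x)=2\int_0^1(1-s)\,\frac{g''(c+s(x-c))}{g''(c)}\,\d s .
\]
Then $h$ is continuous with $h(c)=1$. Set $y=(x-c)\sqrt{h(x)}$. I expect this to be the main obstacle. With only $g\in C^2$, the map $x\mapsto y$ is merely continuous, not $C^1$, so the inverse-function argument becomes delicate.

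I would first try to avoid the substitution altogether. Using $g'(x)=g''(c)(x-c)(1+o(1))$, the substitution $u=g(x)-g(c)$ on each side of $c$ is $C^1$ with $\d x=\d u/g'(x)$. This turns $I_1$ into integrals of the form
\[
\int_0^{\eta}\psi(u)\,u^{-1/2}\,{\e}^{\pm\i\lambda u}\,\d u ,
\]
where $\psi$ is continuous with
\[
\psi(0)=f(c)\big/\sqrt{2|g''(c)|}\ \ \text{(from each side)}.
\]
If one prefers to sidestep regularity issues entirely, one may instead assume slightly more smoothness (e.g.\ $C^3$), as in \cite{Wong1989}, since the lemma is quoted from there.

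\textbf{Step 3 (model integral and error).} Use the Fresnel formula
\[
\int_0^\infty u^{-1/2}{\e}^{\i\sigma\lambda u}\,\d u=\sqrt{\pi/\lambda}\;{\e}^{\i\sigma\pi/4},\qquad \sigma=\operatorname{sgn} g''(c).
\]
Replacing $\psi(u)$ by $\psi(0)$ produces an error. Its size depends on the regularity of $\psi$: if $\psi$ is H\"older continuous (e.g.\ if $g\in C^3$), the error is $O(\lambda^{-1/2-\epsilon})$ by a further integration by parts/Riemann--Lebesgue argument. With mere continuity of $\psi$ one only obtains $o(1)$ after rescaling $u=v/\lambda$, but that is still $o(\lambda^{-1/2})$ overall. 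Truncating the $u$-integral at $\eta$ costs $O(\lambda^{-1})$.

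\textbf{Step 4 (assembly).} Summing the two sides gives
\[
2\cdot \frac{f(c)}{\sqrt{2|g''(c)|}}\sqrt{\frac{\pi}{\lambda}}\;{\e}^{\i\sigma\pi/4}.
\]
Multiplying by the prefactor ${\e}^{\i\lambda g(c)}$ yields exactly the asserted asymptotic, and $f(c)\neq0$ ensures this leading term dominates the $o(\lambda^{-1/2})$ errors. The delicate point throughout is justifying the error bounds in Steps 2--3 under only $C^2$ hypotheses; that is where I expect to need care, or to invoke \cite{Wong1989} directly.
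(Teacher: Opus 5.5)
The paper does not prove this lemma; it only cites \cite{Wong1989}. Your outline is the standard localization argument, and Steps 1, 2 (the one-sided substitution $u=|g(x)-g(c)|$) and 4 are fine. There is, however, a genuine gap exactly where you suspected one: the argument you give in Step 3 is not valid.

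\textbf{The gap in Step 3.} You claim that for merely continuous $\psi$ the error becomes $o(1)$ ``after rescaling $u=v/\lambda$''. After rescaling, the error is $\lambda^{-1/2}\int_0^{\lambda\eta}(\psi(v/\lambda)-\psi(0))v^{-1/2}\mathrm{e}^{\mathrm{i}v}\,\d v$. The range of integration grows and $v^{-1/2}$ is not integrable at infinity. So the pointwise convergence $\psi(v/\lambda)\to\psi(0)$ gives no control on the region $v\ge A$.

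The claim is in fact false for general continuous $\psi$. Take $\psi(u)=\psi(0)+u^{1/2}\sum_{k\ge1}2^{-k/2}\cos(2^k u)$ on $[0,2\pi]$, which is continuous and even $\tfrac12$-H\"older. Then the error equals exactly $\pi\lambda^{-1/2}$ at every $\lambda=2^m$. The same example refutes the implication ``H\"older $\Rightarrow O(\lambda^{-1/2-\epsilon})$''. Moreover, in the $u$-variable $\psi$ is generically no better than $\tfrac12$-H\"older even for smooth $f,g$, because it is a smooth function of $\sqrt u$. So passing to $g\in C^3$ does not rescue that line of argument.

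\textbf{The missing ingredient} is a derivative bound on the amplitude away from $u=0$, and the stated $C^2$ hypotheses do supply it. Write $\Phi(u)=\psi(u)u^{-1/2}=(f\chi/g')(x(u))$. Then $\Phi'(u)=\frac{1}{g'}\bigl(\frac{(f\chi)'}{g'}-\frac{f\chi\,g''}{(g')^2}\bigr)$, evaluated at $x(u)$. Near $c$ you have $|g'(x)|\ge m|x-c|$, $|g''|\le M$ and $u\le\frac M2(x-c)^2$. These give $|\Phi(u)|\le Cu^{-1/2}$ and $|\Phi'(u)|\le Cu^{-3/2}$.

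Now split at $u=A/\lambda$:
\begin{itemize}
\item On $[0,A/\lambda]$ the error is at most $2\sqrt{A/\lambda}\,\sup_{[0,A/\lambda]}|\psi-\psi(0)|$, which is $o(\lambda^{-1/2})$ for fixed $A$.
\item On $[A/\lambda,\eta]$, integrate $\Phi(u)\mathrm{e}^{\mathrm{i}\lambda u}$ by parts once. Do the same for $\psi(0)u^{-1/2}\mathrm{e}^{\mathrm{i}\lambda u}$, taken up to $\infty$. Each piece is $O((A\lambda)^{-1/2})$.
\item Finally let $A\to\infty$.
\end{itemize}
Equivalently, in the $x$-variable: rescale $x=c+t/\sqrt\lambda$, use dominated convergence on $|t|\le A$, and integrate by parts on $A/\sqrt\lambda\le|x-c|\le\delta$. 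This closes the proof under the stated hypotheses (in fact $f\in C^1$ suffices), so there is no need to invoke $C^3$.

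\textbf{Two smaller remarks.}
\begin{itemize}
\item Your stated obstacle with the Morse map is not real. When $g\in C^2$, the map $y=(x-c)\sqrt{h(x)}$ is actually $C^1$ with $y'(c)=1$. However, the resulting amplitude is merely continuous, so that route runs into the same obstacle as above.
\item Step 1 silently uses $g'(a)\ne0\ne g'(b)$, which the lemma as stated does not assume. It is genuinely needed. For example, $\int_0^2\mathrm{e}^{\mathrm{i}\lambda\cos\pi x}\,\d x=2J_0(\lambda)$ has a unique interior stationary point at $x=1$. Yet the stationary endpoints contribute a term of the same order $\lambda^{-1/2}$, so the asserted asymptotic fails. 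You should state this as an added hypothesis.
\end{itemize}
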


These examples and properties deepen the understanding of complex entropy as a measure sensitive not only to the spread of probability but also to the fine structure of the distribution, modulated by the parameter \( \beta \). The limit analyses underscore the importance of the chosen topology when considering convergence of distributions in the context of complex entropy.

\subsection{Numerical Computation}

The complex entropy can be rewritten as
\[
\CE_{\beta}(P) = \left\vert E[{\e}^{\i\beta p(X)}] \right\vert,
\]
where \(X \sim P\). To get the complex entropy of a given distribution, we can apply the following stochastic simulation: 
\begin{enumerate}[Step 1.]
    \item Generate a sample \((X_1, \cdots, X_N)\) of size \( N \) from the distribution \( P \).
    \item Compute \({\e}^{\i \beta p(X_i)}\) for each point \( X_i \), \( i = 1, \cdots, N \).
    \item Take the sample average \[\frac{1}{N}\sum_{i = 1}^{N}{\e}^{\i \beta p(X_i)}\] as an estimate of \(E[{\e}^{\i\beta p(X)}]\).
    \item The modulus of the sample average is approximately the desired complex entropy \(\CE_{\beta}(P)\). 
\end{enumerate}

To illustrate the procedure above, we take normal distribution \( N(0, \sigma^2) \) as an example, where the sample size \( N = 1000 \), \( \beta \in \{0.01, 0.1, 0.5, 1, 2, 5, 10\} \) and the standard deviation \( \sigma \in \{0.01, 0.1, 1, 10, 50, 100\} \). Table \ref{tab:entropy-simulation} reports the results. Simulation study supports our theoretical result that as \( \beta \) approaches zero, the complex entropy is not sensitive to the phase term and thus near the maximum value \( 1 \). Besides, a larger standard deviation of normal distribution means it distributes more uniformly on the real line, which leads to a larger complex entropy.

\begin{table}[ht]
    \centering
    \caption{Complex Entropy of Normal Distribution}
    \label{tab:entropy-simulation}
    \begin{tabular}{c|cccccc}
\toprule
& \multicolumn{6}{c}{$\sigma$} \\
\cmidrule{2-7}
$\beta$ & 0.01 & 0.1 & 1 & 10 & 50 & 100 \\
\midrule
0.01 & 0.99345099 & 0.9999378 & 0.9999994 & 1.0000000 & 1.0000000 & 1.0000000 \\
0.1  & 0.53524514 & 0.9939445 & 0.9999418 & 0.9999994 & 1.0000000 & 1.0000000 \\
0.5  & 0.23662356 & 0.8488114 & 0.9983627 & 0.9999839 & 0.9999994 & 0.9999998 \\
1    & 0.17601432 & 0.5835386 & 0.9936588 & 0.9999348 & 0.9999976 & 0.9999994 \\
2    & 0.09003649 & 0.3954362 & 0.9757199 & 0.9997493 & 0.9999896 & 0.9999975 \\
5    & 0.07554282 & 0.2079881 & 0.8541653 & 0.9985728 & 0.9999410 & 0.9999856 \\
10   & 0.07081759 & 0.1830469 & 0.5282781 & 0.9937500 & 0.9997544 & 0.9999381 \\
\bottomrule    
    \end{tabular}
\end{table}

The above simulation is a standard approach for evaluating complicated integrals. Importance sampling can further enhance approximation accuracy.

\section{Complex Divergence and Metric: Comparing Distributions}\label{sec:CD}

Beyond analyzing a single distribution, we develop tools for comparing two distributions, leading to the concepts of complex divergence and complex metric.

\subsection{Definitions}

Similar to complex entropy that the phase term \( {\e}^{\i\beta p(x)} \) penalize the volatility of PDF (or PMF) \( p(x) \), We also use this formula to penalize the difference between two PDFs (or PMFs). This leads to the definition of complex divergence below. 

\begin{definition}[Complex Divergence] \label{def:complex-divergence}
For two probability measures \( P \) and \( Q \) with PDFs (or PMFs) \( p \) and \( q \), the complex divergence between \( P \) and \( Q \) is defined as
\[
\CD_\beta(P\|Q) = -\log \left| \int_{\R} p(x) {\e}^{\i\beta (p(x) - q(x))} \, \d x \right| \quad \text{(Continuous)},
\]
\[
\CD_\beta(P\|Q) = -\log \left| \sum_x p(x) {\e}^{\i\beta (p(x) - q(x))} \right| \quad \text{(Discrete)},
\]
where \(\beta > 0\) is a tuning parameter. 
\end{definition}

The same derivation as complex entropy shows that \(0 \le \left| \int_{\R} p(x) {\e}^{\i\beta (p(x) - q(x))} \d x \right| \le 1\) and the second inequality takes equality when \(p(x) - q(x)\) is a constant, almost everywhere on the support of \(p\). Then, we have that \(\CD_{\beta}(P\|Q) \ge 0\) and the equality holds when \(p(x) - q(x)\) is a constant, almost everywhere on the support of \(p\). The complex divergence provides a benchmark against the KL divergence. They both measure the similarity of two distributions. However, both of them are not true metric, which inspires us to define a complex metric.

\begin{definition}[Complex Metric] \label{def:complex-metric}
For two probability measures \( P \) and \( Q \) with PDFs (or PMFs) \( p \) and \( q \), the complex metric (or distance) between \( P \) and \( Q \) is defined as
\[
\CM_\beta(P, Q) = \frac{1}{2} \int_{\R} \left| p(x) {\e}^{\i\beta p(x)} - q(x) {\e}^{\i\beta q(x)} \right| \d x \quad \text{(Continuous)},
\]
\[
\CM_\beta(P, Q) = \frac{1}{2} \sum_x \left| p(x) {\e}^{\i\beta p(x)} - q(x) {\e}^{\i\beta q(x)} \right| \quad \text{(Discrete)},
\]
where \(\beta > 0\) is a tuning parameter.
\end{definition}

The factor ``\( \frac{1}{2} \)'' ensures the metric ranges between 0 and 1 and aligns with the standard definition of total variation distance when \( \beta \to 0 \).

\subsection{Properties of the Complex Metric}

The complex metric possesses desirable mathematical properties, making it a robust tool for statistical analysis.

\begin{theorem}[Metric Properties] \label{thm:metric-properties}
For any probability measures \( P, Q, F \), we have the following results:
\begin{enumerate}[(1)]
    \item\label{CM:non-negativity} \textbf{Non-negativity:} \( \CM_\beta(P, Q) \geq 0 \), and \( \CM_\beta(P, Q) = 0 \) if and only if \( P = Q \) almost everywhere.
    \item\label{CM:symmetry} \textbf{Symmetry:} \( \CM_\beta(P, Q) = \CM_\beta(Q, P) \).
    \item\label{CM:triangle-inequality} \textbf{Triangle Inequality:} \( \CM_\beta(P, Q) \leq \CM_\beta(P, F) + \CM_\beta(F, Q) \).
    \item\label{CM:boundedness} \textbf{Boundedness:} \( 0 \leq \CM_\beta(P, Q) \leq 1 \).
    \item\label{CM:connection-to-TV} \textbf{Connection to Total Variation Distance:} \( \lim\limits_{\beta \to 0} \CM_\beta(P, Q) = \frac{1}{2} \int_{\R} |p(x) - q(x)| \, \d x = \TV(P, Q) \).
    \item\label{CM:continuity} \textbf{Continuity:} For any sequences of probability measures \( \{P_n\} \) with PDFs (or PMFs) \( \{p_n\} \) and \( \{Q_n\} \) with PDFs (or PMFs) \( \{q_n\} \), if \( p_n \to p \) and \( q_n \to q \) in total variation, then \( \CM_\beta(P_n, Q_n) \to \CM_\beta(P, Q) \).
\end{enumerate}
\end{theorem}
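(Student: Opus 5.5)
The plan is to view the complex metric as one half of an \(L^1\) distance between transformed functions. Writing \(\phi(t) = t\,\e^{\i\beta t}\) for \(t \ge 0\), as in the proof of Proposition \ref{prop:basic-props-CE}, we have
\[
\CM_\beta(P,Q) = \frac{1}{2}\,\| \phi(p) - \phi(q) \|_{L^1},
\]
and in the discrete case the \(L^1\) norm is taken with respect to counting measure. Symmetry \eqref{CM:symmetry} is then immediate, since \(|a-b| = |b-a|\). The triangle inequality \eqref{CM:triangle-inequality} follows from the pointwise inequality \(|\phi(p)-\phi(q)| \le |\phi(p)-\phi(f)| + |\phi(f)-\phi(q)|\), integrated over \(x\).

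Boundedness \eqref{CM:boundedness} uses \(|\phi(p)-\phi(q)| \le |\phi(p)| + |\phi(q)| = p + q\). Integrating gives \(\CM_\beta \le \frac{1}{2}(1+1) = 1\), and non-negativity is trivial.

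For the equality case in \eqref{CM:non-negativity}, note that \(\CM_\beta(P,Q) = 0\) forces \(\phi(p(x)) = \phi(q(x))\) almost everywhere. The key observation is that \(\phi\) is injective on \([0,\infty)\), because \(|\phi(t)| = t\): taking moduli gives \(p(x) = q(x)\) almost everywhere. The converse direction is clear.

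For \eqref{CM:connection-to-TV}, I would apply dominated convergence in \(\beta\). As \(\beta \to 0\), pointwise \(\phi_\beta(p) - \phi_\beta(q) \to p - q\). The integrand is dominated by \(p + q\), which is integrable and independent of \(\beta\). Hence the integrals converge to \(\frac{1}{2}\int |p-q| = \TV(P,Q)\).

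For continuity \eqref{CM:continuity}, I would apply the reverse triangle inequality in \(L^1\):
\[
|\CM_\beta(P_n,Q_n) - \CM_\beta(P,Q)| \le \frac{1}{2}\Big( \|\phi(p_n)-\phi(p)\|_{L^1} + \|\phi(q_n)-\phi(q)\|_{L^1} \Big).
\]
It then suffices that \(\|\phi(p_n) - \phi(p)\|_{L^1} \to 0\) whenever \(p_n \to p\) in total variation. This is exactly Remark \ref{rem:continuity}.

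This continuity step is the only nontrivial one, because \(\phi\) is not globally Lipschitz: its Lipschitz constant on \([0,M]\) is \(1+\beta M\), which grows with \(M\). I would therefore spell out the remark's argument. Truncate at level \(M\) via \(g_M(t) = \min(t,M)\). The tail terms are controlled by \(2\int_{\{p_n>M\}} p_n\), made uniformly small using the uniform integrability of the \(\{p_n\}\), which follows from total variation convergence. The truncated middle term is controlled by \((1+\beta M)\int |p_n - p|\), which tends to \(0\) for fixed \(M\). Since these bounds are already in \(L^1\) norm, the argument gives convergence of \(\phi(p_n)\) to \(\phi(p)\) in \(L^1\), which is what the estimate above needs.
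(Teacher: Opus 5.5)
Your proposal is correct and follows the paper's proof: the paper omits (1)--(4), obtains (5) by letting \(\beta \to 0\) pointwise, and proves (6) exactly as you do, bounding the difference by \(\frac{1}{2}\int|\phi(p_n)-\phi(p)| + \frac{1}{2}\int|\phi(q_n)-\phi(q)|\) and invoking the truncation argument of Remark~\ref{rem:continuity}. Your added details correctly supply what the paper leaves implicit: injectivity of \(\phi\) via \(|\phi(t)|=t\), the pointwise bound \(p+q\) for boundedness, and dominated convergence to justify the limit in (5).
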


\begin{proof}
The proofs of properties \eqref{CM:non-negativity}--\eqref{CM:boundedness} follow directly from the definition of complex metric and the properties of the absolute value and integral, thus are omitted. Property \eqref{CM:connection-to-TV} is obtained by taking the limit \( \beta \to 0 \), noting that both \( {\e}^{\i\beta p(x)} \to 1 \) and \( {\e}^{\i\beta q(x)} \to 1 \). Then we provide the proof of property \eqref{CM:continuity}. We prove the continuous case; the discrete case is analogous. 

Let $\phi(t) = t {\e}^{\i\beta t}$ for $t \geq 0$. By the triangle inequality,
\begin{align*}
& \left| \CM_\beta(P_n, Q_n) - \CM_\beta(P, Q) \right| \\
= & \frac{1}{2} \left| \int_{\R} |\phi(p_n(x)) - \phi(q_n(x))| \, \d x - \int_{\R} |\phi(p(x)) - \phi(q(x))| \, \d x \right| \\
\leq & \frac{1}{2} \int_{\R} \left| |\phi(p_n) - \phi(q_n)| - |\phi(p) - \phi(q)| \right| \d x \\
\leq & \frac{1}{2} \int_{\R} |\phi(p_n) - \phi(q_n) - \phi(p) + \phi(q)| \, \d x \\
\leq & \frac{1}{2} \int_{\R} |\phi(p_n) - \phi(p)| \, \d x + \frac{1}{2} \int_{\R} |\phi(q_n) - \phi(q)| \, \d x.
\end{align*}
Thus, it suffices to show that $\int_{\R} |\phi(p_n) - \phi(p)| \, \d x \to 0$ (and similarly for $q_n$). This follows from Remark \ref{rem:continuity}.
\end{proof}

Thus, \( \CM_\beta \) is a true metric on the space of probability distributions (or their PDFs or PMFs). Note that the complex divergence \( \CD_\beta \) is non-negative and zero if and only if \( P=Q \), but it is not symmetric \( (\CD_\beta(P\|Q) \neq \CD_\beta(Q\|P)) \) and does not satisfy the triangle inequality.

\subsection{Comparison with Classical Divergences}

To situate our work within the broader landscape, we compare the complex divergence with the widely used KL divergence, \( D_{\KL}(P\|Q) = \int_{\R} p(x) \log(p(x)/q(x)) \, \d x \) (or \( \sum_{x \in \mathcal{X}} p(x) \log(p(x)/q(x)) \)). See Table \ref{tab:divergence-comparison} for details.

\begin{sidewaystable}[htp]
\centering
\caption{Comparison between KL Divergence and Complex Divergence}
\label{tab:divergence-comparison}
\begin{tabular}{ccc}
\toprule
\textbf{Aspect} & \textbf{KL Divergence \( D_{\KL}(P\|Q) \)} & \textbf{Complex Divergence \( \CD_\beta(P\|Q) \)} \\
\midrule
Symmetry & Asymmetric & Asymmetric \\
Range & \( [0, +\infty] \), can be infinite & \( [0, +\infty) \), always finite for fixed \( \beta, P, Q \) \\
Behavior on Disjoint Support & Infinite if \( q(x)=0 \) but \( p(x)>0 \) & Remains finite. The integral modulus is positive \\
Interpretation & Expected excess code length; information gain & Logarithmic measure of phase coherence loss \\
Sensitivity & Highly sensitive to tail differences & Sensitivity shaped by \( \beta \); tunable \\
\bottomrule
\end{tabular}
\end{sidewaystable}

The complex metric \( \CM_\beta \) finds its natural counterparts in other probability metrics like the total variation distance and Hellinger distance. Unlike the total variation distance, which is purely magnitude-based, \( \CM_\beta \) incorporates phase information, potentially capturing more subtle shape discrepancies. Unlike the Hellinger distance, which is an \( L^2 \) type distance between root PDFs (or PMFs), \( \CM_\beta \) is an \( L^1 \) distance between phase-modulated PDFs (or PMFs). This unique formulation, \( p {\e}^{\i\beta p} \), defines a new and distinct class of distribution distances.

\subsection{The Path Integral Analogy: A Deeper Connection}

The form of the complex entropy integral, \( \int_{\R} p(x) {\e}^{\i\beta p(x)} \, \d x \), is not merely an arbitrary construction; it bears a profound and suggestive resemblance to the path integral formulation of quantum mechanics pioneered by \citeauthor{Feynman1965} \cite{Feynman1965}.

In quantum mechanics, the probability amplitude for a particle to transition from state \( A \) to state \( B \) is given by summing over all possible paths \( \gamma \) connecting them, weighting each path by a phase factor proportional to its classical action \( S[\gamma] \):
\[
K(A, B) = \int_{\text{paths } \gamma} \mathcal{D}[\gamma] \, {\e}^{\i S[\gamma] / \hbar},
\]
where \( \hbar \) is the reduced Planck constant and \( \mathcal{D}[\gamma] \) represents integration over all possible paths \( \gamma \).
The probability is then the squared modulus of this amplitude.

\subsubsection*{Drawing the Analogy}
\begin{itemize}
    \item \textbf{State Space as Path Space:} The domain of integration in CE (\( \R \) or \( \mathcal{X} \)) is analogous to the space of all possible paths or histories.
    \item \textbf{Probability Density/Mass as Weight:} The PDF (or PMF) \( p(x) \) assigns a non-negative weight to each state \( x \), analogous to a path integral measure.
    \item \textbf{Phase Functional:} The term \( \beta p(x) \) plays the role of \( S[\gamma] / \hbar \). It is a phase assigned to each state that is a functional of the probability density/mass itself. In physics, the phase is dictated by the laws of dynamics (the action). Here, it is a mathematical construct where the phase is proportional to the probability.
    \item \textbf{Coherent Superposition:} The integral \( \int_{\R} p(x) {\e}^{\i\beta p(x)} \, \d x \) (or the sum \( \sum_{x \in \mathcal{X}} p(x) {\e}^{\i\beta p(x)} \)) represents a coherent superposition (sum) over all states, weighted by both magnitude (\( p(x) \)) and this specially defined phase.
    \item \textbf{Interpretation of CE:} The resulting magnitude, \( \CE_\beta(P) \), is analogous to the total quantum amplitude from a state to itself under the evolution defined by the ``action'' \( \beta p \). A value close to \( 1 \) indicates that the phases associated with the distribution (via function \( p \)) add constructively across the state space -- a hallmark of uniformity or symmetry. A low value reflects destructive interference, implying that the distribution is more structured, peaked, or asymmetric.
\end{itemize}

This analogy is more than aesthetic. It suggests that tools, intuitions, and computational techniques from quantum theory -- such as interference patterns, stationary phase approximations, and the geometry of Hilbert spaces -- could be potentially imported to analyze classical probability distributions. The complex metric \( \CM_\beta \) further strengthens this link, as it resembles an \( L^1 \) distance between two such ``probability wavefunctions'' \( \psi_P(x) = p(x) {\e}^{\i\beta p(x)} \) and \( \psi_Q(x) = q(x) {\e}^{\i\beta q(x)} \).

\section{Application}\label{sec:application}

To demonstrate the practical utility of the proposed framework, we present a detailed application in statistical hypothesis testing, specifically the nonparametric two-sample problem.

\subsection{Problem Setup and Methodology}
Given two independent sets of samples, \( \mathbf{X} = \{X_1, \cdots, X_m\} \stackrel{i.i.d.}{\sim} P \) and \( \mathbf{Y} = \{Y_1, \cdots, Y_n\} \stackrel{i.i.d.}{\sim} Q \), we wish to test the null hypothesis:
\[
H_0: P = Q \quad \text{vs.} \quad H_1: P \neq Q.
\]

We propose using the complex metric \( \CM_\beta \) as the test statistic. The testing procedure is as follows:

\begin{enumerate}[Step 1.]
    \item \textbf{Density Estimation:} From the samples, construct nonparametric density estimates \( \hat{p}_m(x) \) and \( \hat{q}_n(x) \). For continuous data, kernel density estimation (KDE) is a natural choice. For discrete/categorical data, the empirical PMF suffices.
    \item \textbf{Parameter Selection:} Choose a value for the sensitivity parameter \( \beta \). This is a critical step. Heuristics include setting \( \beta \) inversely proportional to a central measure of the estimated density (e.g., \( \beta = 1 / \text{median}(\hat{p}_m) \)), or using cross-validation to select a \( \beta \) that maximizes the test's estimated power against a class of alternatives.
    \item \textbf{Compute Observed Statistic:} Calculate the empirical complex metric:
    \[
    T_{obs} = \CM_\beta(\hat{p}_m, \hat{q}_n).
    \]
    \item \textbf{Approximate the Null Distribution:} Since the analytical distribution of \( T_{obs} \) under \( H_0 \) is intractable, we employ a permutation test:
    \begin{enumerate}[(a)]
        \item Pool the samples: \( \mathbf{Z} = \mathbf{X} \cup \mathbf{Y} \).
        \item For each permutation \( k = 1, \cdots, K \) (typically \( K \geq 1000 \)):
        \begin{enumerate}[i.]
            \item Randomly shuffle \( \mathbf{Z} \) and assign the first \( m \) points to a pseudo-sample \( \mathbf{X}^{(k)} \) and the remaining \( n \) to \( \mathbf{Y}^{(k)} \).
            \item Re-estimate densities \( \hat{p}_m^{(k)} \), \( \hat{q}_n^{(k)} \) from these permuted sets.
            \item Compute the permuted test statistic \( T_{perm}^{(k)} = \CM_\beta(\hat{p}_m^{(k)}, \hat{q}_n^{(k)}) \).
        \end{enumerate}
        \item The empirical null distribution is the set \( \{T_{perm}^{(1)}, \cdots, T_{perm}^{(K)}\} \).
    \end{enumerate}
    \item \textbf{Compute the \(p\)-value and Make a Decision:} The two-sided \(p\)-value is
    \[
    p = \frac{1}{K} \sum_{k=1}^{K} \mathbb{I}\{T_{perm}^{(k)} \geq T_{obs}\}.
    \]
    Reject \( H_0 \) at significance level \( \alpha \) if \( p < \alpha \).
\end{enumerate}

\subsection{Advantages and Considerations}
\begin{itemize}
    \item \textbf{Tunable Sensitivity:} The key advantage is the parameter \( \beta \). For small \( \beta \), the test behaves like a test based on total variation distance. For larger \( \beta \), the phase terms make the statistic sensitive to differences in the shape of the PDFs (or PMFs), not just the locations of mass. For instance, it may powerfully distinguish two distributions with the same mean but different tail behaviors (e.g., normal vs. Laplace) if \( \beta \) is chosen to amplify the phase differences in the tail regions.
    \item \textbf{Metric Properties:} Using a true metric guarantees symmetry and meaningful geometric interpretation.
    \item \textbf{Bounded Statistic:} The statistic is bounded (\( T_{obs} \in [0,1] \)), which can stabilize permutation testing.
    \item \textbf{Computational Cost:} The procedure is computationally intensive due to repeated density estimation and integral calculation within each permutation. Efficient numerical integration and fast KDE algorithms are essential.
    \item \textbf{Choice of \( \beta \):} The performance is contingent on a good choice of \( \beta \). Data-driven selection methods add a layer of complexity and may require nested permutation or cross-validation schemes.
    \item \textbf{Density Estimation Error:} The test's power depends on the accuracy of the initial density estimates \( \hat{p}_m \) and \( \hat{q}_n \).
\end{itemize}

This application showcases how the theoretical construct of the complex metric can be operationalized to solve a fundamental statistical problem, offering a flexible and geometrically grounded alternative to existing tests like the Kolmogorov-Smirnov or energy distance tests.

\section{Conclusion and Future Work}\label{sec:conclusion}

This paper has developed a comprehensive framework for complex-valued probability measures and derived from them novel information-theoretic quantities: complex entropy, complex divergence, and the complex metric. By introducing a phase structure proportional to distribution itself, we have created tools that measure distributional uniformity and similarity through the lens of coherent superposition and interference. Key achievements include:
\begin{itemize}
    \item Providing rigorous definitions for both continuous and discrete distributions.
    \item Establishing fundamental properties such as boundedness, continuity, and clear extremal principles.
    \item Drawing insightful comparisons to classical measures like Shannon entropy and KL divergence, highlighting the normalized, geometry-centric nature of our measures.
    \item Unveiling a profound formal analogy with the path integrals of quantum mechanics, suggesting a deeper conceptual bridge between probability theory and quantum physics.
    \item Demonstrating practical utility through a detailed application in nonparametric hypothesis testing.
\end{itemize}

The framework opens up numerous exciting avenues for future research:

\begin{itemize}
    \item \textbf{High-Dimensional and Multivariate Extensions:} Generalizing the definitions and studying their behavior in multivariate settings is nontrivial but crucial for modern data applications. The curse of dimensionality and the interpretation of phase in high dimensions are key challenges.
    \item \textbf{Statistical Estimation Theory:} Developing and analyzing estimators for \( \CE_\beta \), \( \CD_\beta \), and \( \CM_\beta \) from finite samples. This includes deriving rates of convergence, asymptotic distributions, and efficient computational algorithms.
    \item \textbf{Information Geometry:} Investigating the geometrical structure induced by the complex metric on statistical manifolds. Does it define a Riemannian metric? What are its geodesics?
    \item \textbf{Machine Learning Applications:} Exploring the use of complex divergence as a loss function in generative models (e.g., variational autoencoders, generative adversarial networks) or the complex metric as a kernel in kernel-based methods. The phase-sensitive nature could help model complex data distributions.
    \item \textbf{Quantum-Inspired Algorithms:} Leveraging the path integral analogy to design new quantum or classical algorithms for sampling, optimization, or distributional analysis.
    \item \textbf{Theoretical Physics Connections:} Further formalizing the link to path integrals. Can complex probability measures be given a physical interpretation in certain semiclassical or stochastic limits of quantum theories?
\end{itemize}

In conclusion, by extending probability into the complex domain, we have not only enriched its mathematical structure but also provided a new set of tools with unique interpretive power and practical potential. We believe this framework offers a fresh perspective for information theory, statistical analysis, and beyond.


\printbibliography

\end{document}